\theoremstyle{plain}
    \newtheorem{theorem}{Theorem}
    \newtheorem{lemma}{Lemma}
    \newtheorem{corollary}{Corollary}
\theoremstyle{definition}
    \newtheorem{define}{Definition}
\def\enc#1{\langle #1 \rangle}
\def\IfOneLine#1#2{\State \textbf{if} #1 \textbf{then} #2}
    \algnewcommand\algorithmicswitch{\textbf{switch}}%
    \algnewcommand\algorithmiccase{\textbf{case}}%
      \apptocmd{\EndSwitch}{\algpx@endIndent}{}{}%
      \apptocmd{\EndCase}{\algpx@endIndent}{}{}%
    \pretocmd{\Switch}{\algpx@endCodeCommand}{}{}
    \pretocmd{\Case}{\algpx@endCodeCommand}{}{}
      \pretocmd{\EndSwitch}{\algpx@endCodeCommand[1]}{}{}%
      \pretocmd{\EndCase}{\algpx@endCodeCommand[1]}{}{}%
      \pretocmd{\EndSwitch}{\algpx@endCodeCommand[0]}{}{}%
      \pretocmd{\EndCase}{\algpx@endCodeCommand[0]}{}{}%
\algrenewcommand\algorithmicdo{}
\newif\ifabstract
\newif\iffull
\newcounter{section-preserve}
\newcounter{lemma-preserve}
\newcounter{corollary-preserve}
\newcounter{theorem-preserve}
\newcommand{\blank}[1]{}
\newtoks\magicAppendix
\newtoks\magictoks
\newif\iflater
\long\def\later#1{\iflater#1\else\magictoks={#1}%
	\edef\magictodo{\noexpand\magicAppendix={\the\magicAppendix \par
			\the\magictoks%
	}}
	\magictodo\fi}
\long\def\both#1{\iflater#1\else\magictoks={#1}%
	\edef\magictodo{\noexpand\magicAppendix={\the\magicAppendix \par
			\noexpand\setcounter{theorem-preserve}{\noexpand\arabic{theorem}}%
			\noexpand\setcounter{theorem}{\arabic{theorem}}%
			\noexpand\setcounter{lemma-preserve}{\noexpand\arabic{lemma}}%
			\noexpand\setcounter{lemma}{\arabic{lemma}}%
			\noexpand\setcounter{corollary-preserve}{\noexpand\arabic{corollary}}%
			\noexpand\setcounter{corollary}{\arabic{corollary}}%
			\noexpand\setcounter{section-preserve}{\noexpand\arabic{section}}%
			\noexpand\setcounter{section}{\arabic{section}}%
			\noexpand\let\noexpand\oldsection=\noexpand\thesection
			\noexpand\def\noexpand\thesection{\thesection}
			\noexpand\let\noexpand\oldlabel=\noexpand\label
			\noexpand\let\noexpand\label=\noexpand\blank
			\the\magictoks%
			\noexpand\setcounter{theorem}{\noexpand\arabic{theorem-preserve}}%
			\noexpand\setcounter{lemma}{\noexpand\arabic{lemma-preserve}}%
			\noexpand\setcounter{corollary}{\noexpand\arabic{corollary-preserve}}%
			\noexpand\setcounter{section}{\noexpand\arabic{section-preserve}}%
			\noexpand\let\noexpand\thesection=\noexpand\oldsection
			\noexpand\let\noexpand\label=\noexpand\oldlabel
	}}
	\magictodo
	\the\magictoks\fi}
\def\magicappendix{} 
\begin{document}

\title{Characterizing the Decidability of Finite State Automata Team Games with Communication}
\def\titlerunning{Decidability of Finite Automata Team Games with Communication}

\author{Michael Coulombe
\institute{Department of Electrical Engineering and Computer Science, Massachusetts Institute of Technology, Cambridge, MA, USA}
\email{mcoulomb@mit.edu}
\and Jayson Lynch
\institute{Cheriton School of Computer Science, University of Waterloo, Waterloo, Ontario, Canada}
\email{jayson.lynch@uwaterloo.ca}
}
\def\authorrunning{M. Coulombe, J. Lynch}


\maketitle

\begin{abstract}
In this paper we define a new model of limited communication for multiplayer team games of imperfect information. We prove that the Team DFA Game and Team Formula Game, which have bounded state, remain undecidable when players have a rate of communication which is less than the rate at which they make moves in the game. We also show that meeting this communication threshold causes these games to be decidable.
\end{abstract}

\section{Introduction}

Deciding optimal play in multiplayer games of incomplete information is known to be an undecidable problem~\cite{peterson_multiple-person_1979,peterson2000multiplayer}. This includes games where the state space is bounded, a surprising result first shown of a collection of abstract computation games~\cite{demaine2008constraint} that has been extended to generalized versions of real games, like Team Fortress 2 and Super Smash Bros~\cite{coulombe_et_al:LIPIcs:2018:8805}. However, past work has relied on the complete inability of teammates to communicate during the game, which is often not a realistic assumption.
In this paper we study deterministic models of communication between players in two of these computation games, the Team DFA Game and the Team Formula Game, and show a sharp change in computational complexity based on whether players are able to eventually communicate all of their moves or only able to communicate a constant fraction.

One motivation for this model is a better understanding of real world games.
Many team games played in-person naturally permit free form communication between teammates to coordinate their actions, and
online multiplayer video games often provide communication channels such as voice-chat, text, and emotes to simulate this in-person environment. These include many FPS games such as Team Fortress and Left4Dead, MOBAs such as DOTA2 and League of Legends, and RTS games such as Starcraft and Age of Empires. Some of these examples have drawn research interest in AI/ML~\cite{alphastarblog,berner2019dota} as well as computational complexity\cite{viglietta2014gaming,coulombe_et_al:LIPIcs:2018:8805}. The real-time nature of these games ensures that communication channels are bounded; 
however, modeling free form communication, as well as efficiently implementing meaningful player choices in a real-time setting, makes it difficult to analyze these games with these communication features enabled. 

Outside of the team setting, communication is a central aspect of many other games. For example, in the cooperative card game Hanabi players are unable to see their own hands of cards, but this information is visible to everyone else. In addition, players are not allowed to communicate except through actions in the game, one of which allows players to reveal partial information about what is in another players hand. A perfect information version of Hanabi was shown to be NP-complete~\cite{baffier2016hanabi}. The Crew: Quest for Planet Nine is a cooperative trick taking card game which uses limited communication between players as a core mechanic. The complexity of this game was also studied in the perfect information setting~\cite{reiber2021crew}. Under the limited information setting, containment in NP for both of these games is not obvious, and we see a need for models of player communication in games. Other examples of cooperative boardgames with limited communication channels between players include Mysterium, The Mind, and Magic Maze.

Other games may limit communication simply with time pressure in the game. Examples of fully cooperative games with imperfect information that use time pressure to limit coordination and communication include Space Alert, 5-Minute Dungeon, Keep Talking and Nobody Explodes, and Spaceteam.

Multi-agent, imperfect information games are also a topic of interest in Reinforcement Learning. In \cite{celli2019coordination} algorithms are developed to address team extensive form games of imperfect information where communication is allowed at certain points during gameplay, with Bridge and collusion among some players between hands in poker being the motivating examples. Other work considers Sequential Social Dilemmas, a type of iterated economic game where in any given instant a player is incentivized towards non-cooperative behavior, but cooperative strategies can obtain higher payoff over the game as a whole. Learning algorithms for these models both with and without explicit bounded communication channels were studied in \cite{pmlr-v97-jaques19a}. Purely cooperative settings have also been of interest~\cite{foerster2016learning}.

One major achievement was human level performance on a limited version of DOTA2, a MOBA-style video game~\cite{berner2019dota}. These are real-time, team games with partially observable state. Although both text and voice chat are typically allowed in professional play, the AI system did not utilize these explicit communication mechanisms. The board game Diplomacy, while not explicitly a team game, features coordination and temporary alliances between players as a core game dynamic. This game has also seen interest as a new challenge in the AI community, but focusing on No-Press Diplomacy which does not allow explicit communication between players~\cite{paquette2019no,bakhtin2021no}.

These examples of both AI and computational complexity research which considers games with cooperation and communication motivate, but frequently ignore the important role of communication in these games, motivates this paper.

\paragraph*{Paper Organization.} In Section~\ref{sec: Communication Model} we formally define our model of communication for the Team DFA Game. In Section~\ref{sec:basic} we prove undecidability for a few simple communication patterns to help build intuition for the techniques used in the next section. In Section~\ref{sec:general} we prove our main undecidability result for Team DFA Games with Communication. In Section~\ref{sec: Decidability} we prove the game becomes decidable when either both players can communicate all information about their moves, or one player receives no information but can communicate all of their moves to their teammate. In Section~\ref{sec: Team Formula Game} we show that analogous algorithms and undecidability results hold for Team Formula Games.

\subsection{Team DFA Game}

The Team DFA Game is a bounded state, two-vs-one multiplayer game, defined by \cite{coulombe_et_al:LIPIcs:2018:8805} as a simplification of the undecidable Team Computation Game \cite{demaine2008constraint}. It involves a team of two players, the $\exists$~players, and an adversary, the $\forall$~player, who take turns sending a bit to a deterministic finite automaton (DFA).
Each team's goal is to put the DFA into any of a set of winning final states for their team.
The $\forall$~player knows the moves of the $\exists$~players and thus the state of the DFA; however the $\exists$~players do not know each other's moves, which they must make after learning only one of $\forall$'s moves each turn.

\begin{define}
	The Team DFA Game (TDG) is a two-versus-one team game.
	An instance of the game is a DFA $D = (\{0,1\}, Q, q_0, \delta, F = F_\exists \cup F_\forall)$.
	The existential team $\{\exists_0,\exists_1\}$ competes against the universal team $\{\forall\}$.
	The game starts with $D$ in state $q_0$ and each round proceeds as follows:
	\begin{enumerate}
		\item If D's state $q \in F_\exists$ then team existential wins. If $q \in F_\forall$ then team universal wins.
		\item $\forall$ learns the state $q$ of $D$ then inputs two bits $b_0,b_1$ into $D$.
		\item $\exists_0$ learns $b_0$ then inputs one bit $m_0$ into $D$. $\forall$ learns $m_0$.
		\item $\exists_1$ learns $b_1$ then inputs one bit $m_1$ into $D$. $\forall$ learns $m_1$.
	\end{enumerate}
\end{define}

The problem we consider in this paper is: given an instance of the game, does a particular team have a \emph{forced win}?
More formally, does there exist a strategy function $s_i$ for each player $i$ on the team, specifying on each turn which move to make based on any information they have learned so far, that when followed will guarantee that this team will win?
In this paper, we define the complexity of a game as the complexity of whether a specified team has a forced win in the game, such as in the following:

\begin{theorem}[previous work \cite{coulombe_et_al:LIPIcs:2018:8805}]
	The Team DFA Game is undecidable.
\end{theorem}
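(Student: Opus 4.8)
The plan is to reduce an undecidable problem about Turing machines to the Team DFA Game; the cleanest source is the acceptance (equivalently, halting) problem for a single-tape Turing machine $M$ on empty input, although one may equally start from the Team Computation Game of \cite{demaine2008constraint}, of which the TDG is a streamlined form. Given $M$, I would build a DFA $D$ so that the existential team has a forced win in the TDG on $D$ if and only if $M$ has an accepting computation. Since acceptance on empty input is $\Sigma_1$-complete and hence not recursive, undecidability of the TDG follows. It is worth stressing up front why a game on a \emph{finite} automaton can be undecidable at all: the number of rounds is unbounded, and the players' strategies are arbitrary functions of their observation histories, so the $\exists$ players can collectively ``write'' an arbitrarily long object into $D$ even though $D$ itself has bounded memory. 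What $D$ contributes is \emph{constant-window bookkeeping}; the unbounded addressing and memory live in the players.

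The intended encoding is the standard computation-tableau construction. Over the rounds, the $\exists$ players stream into $D$ an encoding of a tableau $C_0 \# C_1 \# \cdots \# C_T$, where $C_0$ is the initial configuration, $C_T$ is accepting, and each $C_{i+1}$ should follow from $C_i$ by $M$'s transition rule. The DFA $D$ is designed to (i) verify that the first configuration presented is the initial one, (ii) verify, by a continuously-running sliding-window check of constant width, that the local transition rule of $M$ is respected between consecutive configurations, (iii) verify that the two $\exists$ players are presenting the \emph{same} tableau, by interleaving and cross-checking their contributions, and (iv) move to a state in $F_\exists$ exactly when an accepting configuration is reached with no check having failed, and to $F_\forall$ (or an absorbing non-accepting state) the moment any check fails. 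Because $\exists_0$ and $\exists_1$ each see only their own bit $b_i$ from $\forall$ and never each other's moves, $\forall$ uses its two private channels to \emph{audit}: at a moment of its choosing it can direct the two players to report information about a common location of the tableau, and $D$ checks these reports against each other and against the stream. The point of routing the audit through the players' disjoint information is that a cheating team cannot know \emph{where} it will be probed and therefore cannot sustain a locally plausible but globally impossible tableau; by Cook--Levin-style local-consistency reasoning, surviving every possible audit forces the streamed tableau to be a genuine accepting run, hence $M$ accepts.

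Completeness is routine: if $M$ accepts, the two players fix in advance the (shared) accepting tableau and stream it honestly; every check and every audit passes, so $D$ reaches $F_\exists$ and the team wins. Soundness is the crux and the expected main obstacle. One must show that when $M$ does not accept, the $\exists$ team has \emph{no} forced win --- i.e., $\forall$ has a deterministic strategy preventing $D$ from ever entering $F_\exists$. Two features make this subtle. First, $\forall$ has no randomness, so the audit must be genuinely unbeatable rather than merely catch cheaters with positive probability; this is why the burden is split, with $D$'s continuously-running local checks doing most of the work and $\forall$'s audit only closing the residual gap --- the mutual consistency of the two players' separate contributions --- that a bounded automaton cannot check on its own. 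Second, the essential use of the two players' \emph{distinct} imperfect information must be made precise: with a single imperfect-information $\exists$ player the game would be decidable, since that player's knowledge ranges over a bounded set of observations and a belief-state (powerset) construction turns the game into a finite perfect-information reachability game; it is precisely the inability of two players to reconcile their separate beliefs that defeats this construction and injects the undecidability. Verifying that the gadgets realizing (i)--(iv) together with the audit indeed force this dichotomy --- and that the whole construction of $D$ is computable from $M$, say in polynomial time --- completes the proof.
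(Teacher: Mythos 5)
Your plan is essentially the proof from the literature that the paper cites rather than reproves: the theorem is imported from \cite{coulombe_et_al:LIPIcs:2018:8805}, whose argument (going back through the Team Computation Game of \cite{demaine2008constraint} to \cite{peterson_multiple-person_1979}) is exactly the tableau-streaming construction you describe, with the unbounded computation stored in the players' history-dependent strategies, constant-window local checks in the automaton, and $\forall$'s private bits used to audit the two non-communicating $\exists$ players against each other. The only piece you leave at the level of intent is the mechanism that lets a constant-state DFA compare unboundedly distant tableau positions --- in the standard construction $\forall$ secretly desynchronizes the two players so that one emits configuration $C_i$ while the other emits $C_{i+1}$ in lockstep, reducing the cross-check to a bounded sliding window --- but this is the same approach, not a different one.
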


A number of variations of this game, all undecidable in the general case, exist. These include Team Computation Game where players give inputs to a Turing machine on a bounded tape~\cite{peterson_multiple-person_1979}, Team Constraint Logic Game where players make moves in a partially observable constraint logic graph~\cite{demaine2008constraint}; and Team Formula Game where players flip the values of Boolean variables trying to satisfy different formulas~\cite{peterson2000multiplayer,demaine2008constraint}.

\subsection{Communication Model}
\label{sec: Communication Model}

\begin{figure}[bp]
\centering
\includegraphics[width=.75\textwidth]{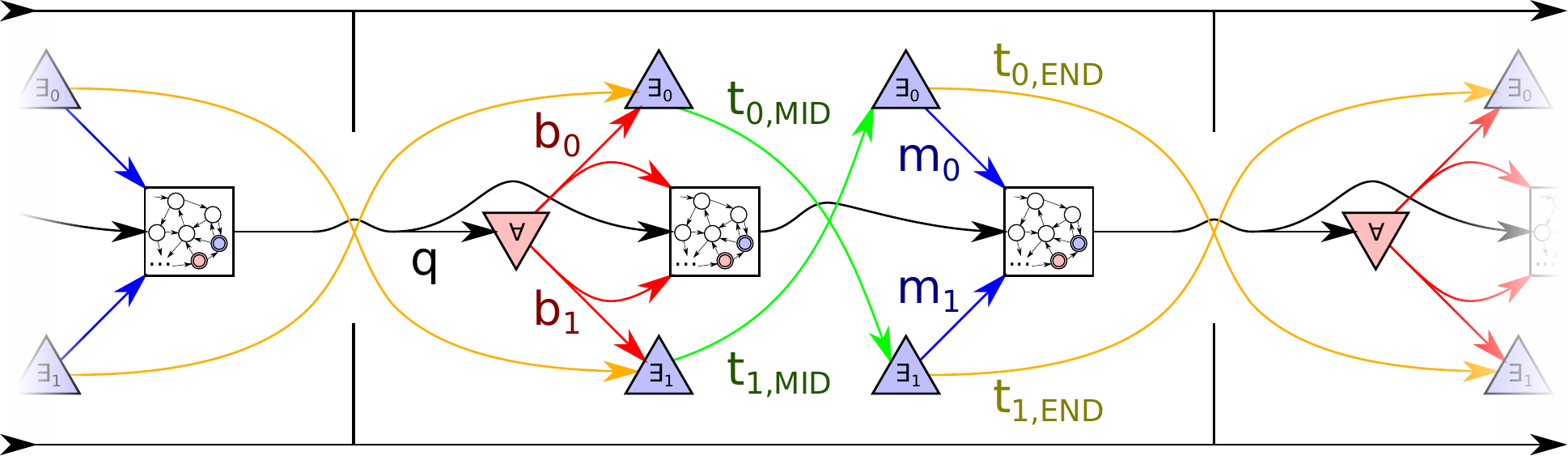}
\caption{Information flow graph of one round of the Team DFA Game with Communication, including from the previous round and into the next round. 
New to this game are the mid-round transmissions, $t_{0,\textsc{mid}}$ and $t_{1,\textsc{mid}}$, and the end-of-round transmissions, $t_{0,\textsc{end}}$ and $t_{1,\textsc{end}}$, which have sizes determined by $P_\textsc{mid}$ and $P_\textsc{end}$ applied to the policy state.
}
\label{fig:tdgc_dag}
\end{figure}

We model communication in the Team DFA Game with a policy that specifies the bandwidth of a dynamic information channel, as one might have due to natural factors (e.g. playing a real-time game with voice chat) or intentional game design (e.g. Hanabi's card-revealing moves) allowing a predictable but bounded amount of player-to-player communication between moves. Specifically, a policy $P$ is a DFA over a unary alphabet with functions $P_\textsc{mid}, P_\textsc{end}$ over its states. In a round of the game in policy state $p$, 
$P_\textsc{mid}(p)$ is the number of bits which are exchanged simultaneously between $\exists_0$ and $\exists_1$ after $(b_0,b_1)$ are revealed but before $(m_0,m_1)$ must be determined,
and similarly $P_\textsc{end}(p)$ is the number of bits to exchange after $(m_0,m_1)$ are submitted but before the next round starts.

\begin{define}
    The Team DFA Game with Communication (TDGC) is a game of
    the existential team $\{\exists_0,\exists_1\}$ versus the universal team $\{\forall\}$, extending the Team DFA Game.
    An instance of the game is a pair of a game DFA $D = (\{0,1\}, Q, q_0, \delta, F_\exists \cup F_\forall)$ and a policy $P$, which consists of a policy DFA $(\{1\},\Pi,p_0,\pi,\emptyset)$ and functions $P_\textsc{mid}, P_\textsc{end} : \Pi \to \mathbb{N} \times \mathbb{N}$.
    The game starts with $D$ in state $q_0$ and the policy DFA in state $p_0$, and each round proceeds
    with added communication steps
    as illustrated in Figure~\ref{fig:tdgc_dag}.

\end{define}

\later{ 
	\begin{algorithm}
		\caption{Team DFA Game: execution of one round, given $D$ is in state $q$ and $P$ is in state $p$.}
		\label{alg:tdg-def}
		\begin{algorithmic}[1]
			\Function{team-dfa-game-round}{$q$, $p$}
			\State If $q \in F_\exists$, then the existential team wins.
			
			\State If $q \in F_\forall$, then the universal team wins.
			
			\State $\forall$ learns $q$, then inputs two bits $b_0,b_1$ into $D$. 
			\Comment{$q \gets \delta(\delta(q,b_0),b_1)$}
			
			\State $\exists_0$ learns $b_0$, and $\exists_1$ learns $b_1$.
			
			\State 
			\Call{exchange}{$P_\textsc{mid}(p)$}
			
			\State $\exists_0$ inputs one bit $m_0$ into $D$. 
			\Comment{$q \gets \delta(q,m_0)$}
			
			\State $\exists_1$ inputs one bit $m_1$ into $D$.
			\Comment{$q \gets \delta(q,m_1)$}
			
			\State $\forall$ learns $m_0$ and $m_1$.
			
			\State 
			\Call{exchange}{$P_\textsc{end}(p)$}
			
			\State Advance the policy state. \Comment{$p \gets \pi(p,1)$}
			\EndFunction
		\end{algorithmic}

		\begin{algorithmic}[1]
			\Function{exchange}{$(n_{01}, n_{10})$}
			\State $\exists_0$ privately defines message $t_0$, consisting of $n_{01}$ bits.
			\State $\exists_1$ privately defines message $t_1$, consisting of $n_{10}$ bits.
			\State $\exists_0$ and $\forall$ learn $t_1$.
			\State $\exists_1$ and $\forall$ learn $t_0$.
			\EndFunction
		\end{algorithmic}
\end{algorithm}
}

There are two beneficial aspects of studying policies as DFAs on unary alphabets: bounding the state space allows for the policy to be computable by the mechanics of a bounded-state game (such as the DFA in the Team DFA Game), and giving every state exactly one next state (for the next round of the game) means the bandwidth every round will be known in advance when building our constructions, rather than being dependent on player actions.
As a result of this choice, it is important to note that the shape of its state transition graph will always have the form: from the start state, there is an initial chain of unique states (possibly of length zero) that leads to a cycle of periodically-repeating states. Also shown in Figure~\ref{fig:policy-dfa}.

\begin{figure}[ht]
\centering
\includegraphics[scale=1.00]{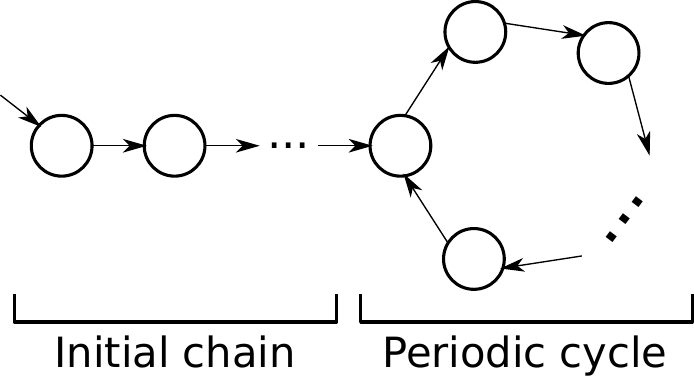}
\caption{General form of a policy DFA: an initial chain followed by a cycle.}
\label{fig:policy-dfa}
\end{figure}

\section{Undecidability of Simple Communication Games}
\label{sec:basic}

\ifabstract
\later{
\section{Undecidability of Simple Communication Games}
\label{app:basic}}
\fi

In this section, we will explore some basic classes of policies that preserve the undecidability of the Team DFA Game with Communication. Our proof technique is to reduce from the zero-communication Team DFA Game, where we compensate for the message passing by ``clogging the channel" with the forced transmission of garbage bits that do not facilitate information sharing. Section~\ref{sec:general} builds upon these examples to obtain more general results, however proving the special cases in this section allows us to introduce ideas needed in the full proof and discuss some of the techniques more concretely.

For each class of policies $\mathcal{P}$ 
below, we will show that given any policy $P \in \mathcal{P}$ and DFA $D$ for playing TDG, 
we can produce a DFA $D'$ for playing TDGC under $P$ such that the $\exists$ team has a forced win on $D$ with no channel iff they have a forced win on $D'$ given a channel following policy $P$. As TDG is undecidable, so will be TDGC under any policy $P \in \mathcal{P}$.
For simplicity, we consider policies with DFA $C_r$, a length-$r$ cycle of states $\Pi = \{0,1,\ldots,r-1\}$ with no initial chain, for arbitrary values of $r$.

\iffull
\fi

\begin{figure}[htp]
\centering
\includegraphics[scale=0.80]{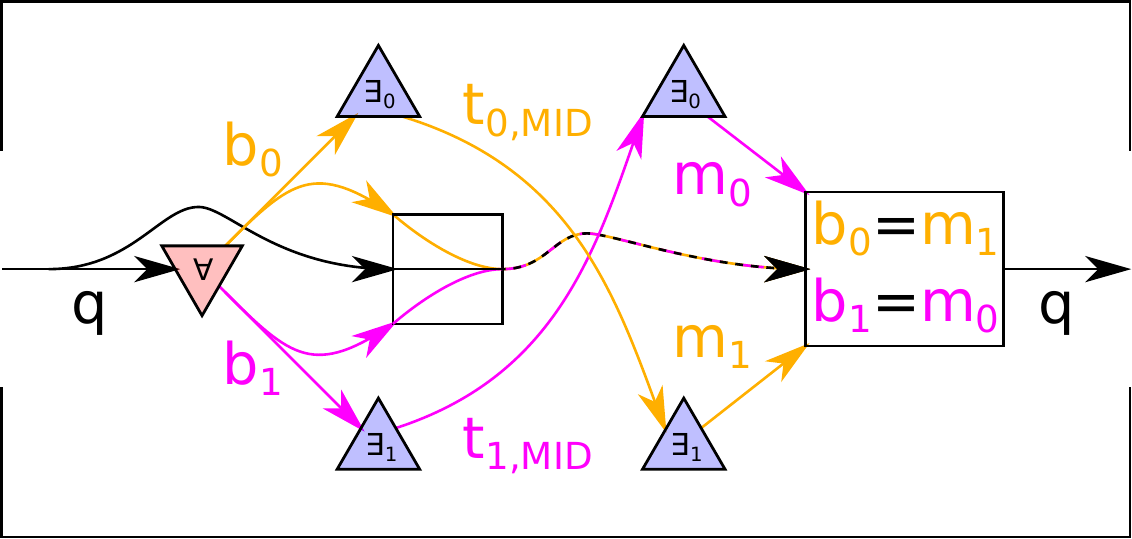}
\caption{Mid-round 1-bit channel clogging technique. Values with the same color must be equal, namely $t_i = b_i = m_{1-i}$, or else the DFA permanently enters $F_\forall$.}
\label{fig:mid_clog}
\end{figure}

\begin{theorem}\label{thm:mid-1-per-r}
TDGC is undecidable with a 1 bit mid-round exchange every $r \geq 2$ rounds:
policies $P$ where
$P_\textsc{mid}(p) = (1,1)$ if $p \equiv 0 \mod r$, $P_\textsc{mid}(p) = (0,0)$ otherwise,
and
$P_\textsc{end}(p) = (0,0)$.
\end{theorem}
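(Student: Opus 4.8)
The plan is to reduce from the zero‑communication Team DFA Game, which is undecidable by the cited previous work. Given a TDG instance $D = (\{0,1\},Q,q_0,\delta,F_\exists\cup F_\forall)$ and a policy $P$ of the stated form over $C_r$, I would build a DFA $D'$ whose states carry three components: a simulated state of $D$, a phase counter $j\in\{0,1,\dots,r-1\}$ kept in lockstep with the policy DFA, and a small buffer recording the bits already seen within the current round. For the $r-1$ phases with $j\neq 0$ (these exist since $r\geq 2$), where $P_\textsc{mid}(p)=P_\textsc{end}(p)=(0,0)$, $D'$ simply replays one round of $D$: it applies $\delta$ to $\forall$'s two bits and then to $\exists_0$'s and $\exists_1$'s bits, and it inherits $F_\exists$ and $F_\forall$ from $D$. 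Restricted to these rounds the game on $D'$ is literally a play of TDG on $D$.

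For the single phase with $j=0$, which has a $1$‑bit mid‑round exchange, $D'$ performs a ``clog round'' as in Figure~\ref{fig:mid_clog}: it leaves the simulated $D$‑state untouched, records $\forall$'s bits $b_0,b_1$, and then forces $m_0=b_1$ and $m_1=b_0$, sending the game irrevocably into $F_\forall$ if either equality fails, and otherwise advancing the phase counter to $1$ with the simulated state unchanged. The purpose of the cross constraint is that $\exists_0$, who learns only $b_0$, must nonetheless output $m_0=b_1$; the only way it can know $b_1$ is for $\exists_1$ — who does know $b_1$ — to send it over the channel, and symmetrically $\exists_0$ must send $b_0$ to $\exists_1$. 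Hence in any winning $\exists$‑strategy the single bit crossing the channel in a clog round is a bijective image of the $b_i$ that $\forall$ has just freshly chosen: the channel is clogged and carries no information about the simulated game.

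With $D'$ in hand I would prove both directions of the equivalence ``$\exists$ has a forced win on $D$ (no channel) iff $\exists$ has a forced win on $(D',P)$''. Forward: extend a $D$‑winning strategy by the canonical clog behaviour — $\exists_1$ sends $t_1=b_1$, $\exists_0$ sends $t_0=b_0$, then they play $m_0=t_1$ and $m_1=t_0$ — which always satisfies the clog constraints (so $\forall$ can never steal a clog round) and preserves the simulated state, while in the other rounds the strategy plays as in $D$. Backward: a $D'$‑winning strategy must, without loss of generality, be canonical in clog rounds, because any deviation lets $\forall$ choose $b_0,b_1$ so that some $\exists$ player cannot determine the bit it is forced to play, contradicting that the strategy wins; and a canonical strategy uses clog rounds only as forced, information‑free moves, so projecting it onto the non‑communication rounds (fixing the clog‑round inputs) yields a strategy that forces a win on $D$ with no channel. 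Since $D'$ is computable from $D$ and $r$, and TDG is undecidable, TDGC under the stated policies is undecidable.

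The step I expect to be the main obstacle is the backward direction — making rigorous that the clog round transmits nothing useful. The delicate point is that $\exists_1$'s relayed bit is pinned down only up to negation, and one must verify that the choice of negation cannot smuggle out information about $\exists_1$'s hidden history: this holds because $\exists_0$ must be able to recover $b_1$ knowing only its own view, which forces the negation bit to be constant across all histories consistent with that view, so it is already common knowledge and leaks nothing. Everything else — the product construction, the synchronization of $D'$'s phase counter with the length‑$r$ cycle, and the faithfulness of the simulation rounds — is routine bookkeeping.
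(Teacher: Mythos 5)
Your construction and argument are essentially identical to the paper's proof: the same product DFA tracking the cycle position, the same cross-check $b_0 = m_1 \land b_1 = m_0$ in exchange rounds (with the simulated state frozen) to clog the channel, and the same two-directional equivalence of forced wins. The extra care you take in the backward direction about the relayed bit being pinned down only up to a negation that must be common knowledge is a point the paper leaves implicit, but it does not change the approach.
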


\begin{proof}
	We construct a DFA $D'$ by first augmenting the state $q$ of $D$ with the state $p$ of $C_r$. When $p \not\equiv 0 \mod r$, $D'$ simply simulates $D$ for one round. However, when $p \equiv 0 \mod r$, $D'$ instead takes the inputs $(b_0, b_1, m_0, m_1)$ and tests $b_0 = m_1 \land b_1 = m_0$. If the test fails, then $D'$ enters a final state for $\forall$.
	
	How $D'$ clogs the channel is diagrammed in Figure~\ref{fig:mid_clog}. By tracking the round number, it knows exactly when $\exists_0$ and $\exists_1$ will exchange bits, and in that round $D'$ expects $\exists_0$ to guess $b_1$, a bit that $\exists_0$ does not learn by the game procedure, and vice-versa. $\exists_0$ and $\exists_1$ are forced to spend their single bit of communication on exchanging $b_0$ and $b_1$ to their teammate, in order to guarantee survival against any $\forall$ strategy for choosing $b_0$ and $b_1$.
	
	Since $\exists_0$ and $\exists_1$ do not learn anything from each other or alter the simulated $D$'s state in the rounds with communication, they have a winning strategy on $D'$ playing TDGC under $P$ if and only if they have a strategy for the non-exchanging rounds (which happen infinitely-often since $r \geq 2$) that would give a winning strategy on $D$ playing TDG.
\end{proof}


\begin{theorem}\label{thm:mid-n-ones-per-r}
TDGC is undecidable with $n$ rounds of 1-bit mid-round exchanges across $r>n$ rounds:
policies $P$ where 
$P_\textsc{mid}(p) \in \{(0,0),(1,1)\}$
with
pre-image size $|P_\textsc{mid}^{-1}((1,1))| = n$
and
$P_\textsc{end}(p) = (0,0)$.
\end{theorem}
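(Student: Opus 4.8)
The plan is to generalize the single-exchange construction of Theorem~\ref{thm:mid-1-per-r}. As before, we build $D'$ by augmenting the state $q$ of $D$ with the policy state $p \in \{0,1,\ldots,r-1\}$ of $C_r$; advancing $p$ each round is free since $C_r$ is unary. The key observation is that the $n$ rounds in which $P_\textsc{mid} = (1,1)$ are fixed in advance — they are exactly the $n$ elements of the set $S := P_\textsc{mid}^{-1}((1,1)) \subseteq \{0,\ldots,r-1\}$, known when we construct $D'$. In each such round, $D'$ performs the same "channel-clogging" trick: it reads the four bits $(b_0,b_1,m_0,m_1)$, checks $b_0 = m_1 \land b_1 = m_0$, and sends the DFA permanently into a state of $F_\forall$ if the check fails; it does not otherwise alter the simulated state of $D$. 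In each round with $p \notin S$ (so $P_\textsc{mid} = (0,0)$), $D'$ simply simulates one round of $D$.

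The correctness argument mirrors the earlier one. In each clogging round, any $\forall$ strategy may pick $b_0,b_1$ adversarially, and since $\exists_i$ learns only $b_i$, the only way for the team to guarantee passing the test $b_0 = m_1 \land b_1 = m_0$ against all $\forall$ behavior is for $\exists_0$ to transmit $b_0$ and $\exists_1$ to transmit $b_1$ using their single mid-round bit — so the entire communication budget of that round is consumed, and no information about the simulated game of $D$ can be shared. Thus the clogging rounds neither transmit useful information between teammates nor change $D$'s simulated state, so a winning $\exists$ strategy on $D'$ restricts to a strategy for the non-clogging rounds, and conversely any winning $\exists$ strategy on $D$ extends (by sending $b_i$ in clogging rounds, arbitrarily elsewhere) to a winning strategy on $D'$. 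Since $r > n$, there is at least one residue $p \notin S$, so non-clogging rounds occur infinitely often and the simulation of $D$ makes progress; hence the $\exists$ team forces a win on $D$ playing TDG iff it forces a win on $D'$ playing TDGC under $P$. Undecidability of TDG then gives undecidability of TDGC under any such $P$.

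The one point requiring slight care beyond the $n=1$ case is verifying that consecutive clogging rounds (possible here, since $S$ need not be "spread out") cause no trouble: because a clogging round leaves the simulated state of $D$ untouched and forces all communication to be garbage regardless of what the players learned, a block of several consecutive clogging rounds behaves exactly like a single delay, and the argument goes through verbatim. This is the main obstacle to generalizing, and it is mild. It is also worth noting that the construction only uses the fact that $|S| = n < r$; the precise positions of the clogging rounds are irrelevant, which is why the theorem quantifies over all policies with that pre-image size.
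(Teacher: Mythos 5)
Your proposal is correct and follows essentially the same route as the paper: clog the channel in every round where $P_\textsc{mid}(p) = (1,1)$ using the guess-your-teammate's-bit check from Theorem~\ref{thm:mid-1-per-r}, and simulate $D$ in the remaining $r-n>0$ rounds per period, which occur infinitely often. Your added remarks about consecutive clogging rounds and the irrelevance of the positions in $P_\textsc{mid}^{-1}((1,1))$ are correct elaborations of details the paper leaves implicit.
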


\begin{proof}
	We generalize Theorem~\ref{thm:mid-1-per-r} by constructing a DFA $D'$ which clogs the channel on any round $p \pmod{r}$ in which $P_\textsc{mid}(p) = (1,1)$ and simulates $D$ in the other $r-n>0$ out of $r$ rounds.
	By the same argument, this prevents communication between $\exists_0$ and $\exists_1$ while playing TDGC beyond the corresponding play of TDG taking place during non-exchanging rounds, and thus preserves forced win-ability.\looseness=-1
\end{proof}

\later{
\iffull
\subsection{End-round Communication}
\fi

\begin{figure}[htp]
\centering
\includegraphics[width=\textwidth]{figures/dfagame/end_order_clog_r_geq_3.pdf}
\caption{End-round channel clogging technique when $r \geq 3$, showing two rounds. The faded-out edges represent messages $(m_0,m_1,b_0',b_1')$, which are not used. $D'$ simulates $D$ on other rounds.}
\label{fig:end_clog_geq_3}
\end{figure}

\begin{figure}[htp]
\centering
\includegraphics[width=\textwidth]{figures/dfagame/end_order_clog_r_eq_2.pdf}
\caption{End-round channel clogging technique when $r=2$, showing three rounds. Bits $(b_0',b_1')$ created in odd rounds get checked two rounds later, labeled as $(b_0^*,b_1^*)$. $D'$ simulates $D$ in even rounds before $\exists$~players get a chance to exchange those bits.
 }
\label{fig:end_clog_eq_2}
\end{figure}

\begin{theorem}
\label{thm:end_clog_geq_3}
TDGC is undecidable with a 1 bit end-round exchange every $r \geq 3$ rounds: 
policies $P$ where 
$P_\textsc{mid}(p) = (0,0)$
and
$P_\textsc{end}(p) = \begin{cases}
	(1,1), &\text{if } p \equiv 0 \mod r\\
	(0,0) & \text{otherwise}\\
\end{cases}$.
\end{theorem}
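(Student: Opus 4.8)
The plan is to reduce from the zero-communication Team DFA Game, in the same spirit as Theorems~\ref{thm:mid-1-per-r} and~\ref{thm:mid-n-ones-per-r}, but with a clogging gadget adapted to the fact that an end-round exchange is \emph{forward-looking}: the bits $t_0,t_1$ sent at the end of the round $p\equiv 0\bmod r$ cannot affect that round's moves, only later rounds. So instead of asking each $\exists$~player to guess the other's current input, I would use the round $p\equiv 0\bmod r$ to let $\forall$ plant an adversarial challenge pair $(b_0,b_1)$ into $D'$, and then in the next round $p\equiv 1\bmod r$ force $\exists_0$ to replay $b_1$ and $\exists_1$ to replay $b_0$ as their moves, with $D'$ checking this and sending the DFA to $F_\forall$ on a mismatch. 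Since $\exists_0$ never learns $b_1$ from the game itself and the only channel between the teammates capable of carrying it is the single-bit end-exchange $t_1$ (and symmetrically for $t_0$ and $b_0$), the two $\exists$~players are forced to spend $t_1=b_1$ and $t_0=b_0$, leaving no bandwidth for anything else. This consumes two consecutive rounds out of every $r$; the remaining $r-2\ge 1$ rounds per cycle simulate $D$ faithfully, which is exactly where the hypothesis $r\ge 3$ is used (the $r=2$ case requires a tighter overlapping scheme, treated separately).

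Concretely, $D'$ augments $D$'s state $q$ with the round counter $p\in\{0,\dots,r-1\}$ of $C_r$ and a two-bit register. On a round with $p\equiv 0\bmod r$, $D'$ stores $\forall$'s two input bits $(b_0,b_1)$ in the register, ignores $(m_0,m_1)$, performs the end-exchange, and advances $p$ with $q$ unchanged. On a round with $p\equiv 1\bmod r$, $D'$ ignores $\forall$'s two bits, reads $(m_0,m_1)$, and moves to a sink in $F_\forall$ unless $m_0$ equals the stored $b_1$ and $m_1$ equals the stored $b_0$; otherwise it advances $p$ with $q$ unchanged. On every other round $D'$ faithfully runs one round of $D$, including the winning-state checks of step~1. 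The faded edges of Figure~\ref{fig:end_clog_geq_3} are precisely the discarded $\exists$~moves of the $p\equiv 0$ round and the discarded $\forall$~bits of the $p\equiv 1$ round.

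For correctness I would argue both directions, mirroring the earlier proofs. If the $\exists$~team has a forced win on $D$ playing TDG, then on $D'$ they win by sending $t_0=b_0$, $t_1=b_1$ at the end of each $p\equiv 0$ round, playing $m_0=b_1$, $m_1=b_0$ in each $p\equiv 1$ round, and following their $D$-strategy on the simulation rounds, whose interleaved play as a TDG instance is unchanged. Conversely, suppose the $\exists$~team has a forced win on $D'$. In the clogging rounds $\forall$ may choose $(b_0,b_1)$ freely, and since $\exists_0$'s only information about $b_1$ is the one-bit message $t_1$ yet it must output $m_0=b_1$ for both values, $t_1$ must be an invertible function of $b_1$ (symmetrically for $t_0$ and $b_0$); thus the end-channel is fully consumed, no information about $D$'s play crosses between the teammates, and the simulated state of $D$ is untouched during clogging rounds. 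Moreover the forced win must in particular survive $\forall$ playing constant challenge bits $b_0=b_1=0$, which makes the clogging rounds inert, so restricting the strategy to the simulation rounds — which recur infinitely often because $r\ge 3$ — gives a forced $\exists$~win on $D$. The main obstacle is the conceptual shift to a two-round setup-and-check gadget forced by the forward-looking exchange (and the resulting need for $r\ge 3$); once that is in place, the channel-saturation and no-leakage arguments are essentially those of Theorems~\ref{thm:mid-1-per-r}--\ref{thm:mid-n-ones-per-r}.
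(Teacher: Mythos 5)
Your proposal matches the paper's proof essentially exactly: the same two-round setup-and-check gadget in which $\forall$'s bits $(b_0,b_1)$ are stored on round $p\equiv 0$, the $\exists$~players' moves are validated against the swapped stored bits on round $p\equiv 1$, and the remaining $r-2\geq 1$ rounds per cycle simulate $D$. Your correctness argument (channel saturation via invertibility of $t_i$, plus restricting to the simulation rounds) is just a more explicit version of the ``similar arguments to previous theorems'' the paper invokes, so there is nothing further to add.
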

\begin{proof}
	We construct a DFA $D'$ like before, by augmenting $D$ with $C_r$'s state as well as two bits of storage, initialized to $(0,0)$.
	When $p \equiv 0 \mod r$, $D'$ stores the inputs $(b_0,b_1)$ from $\forall$ and ignores $\exists_0$ and $\exists_1$ in that round. In the next round, when $p \equiv 1 \mod r$, it will ignore $\forall$ in that round and use those stored bits to test $b_0 = m_1 \land b_1 = m_0$. If the test fails, then $D'$ enters a final state for $\forall$. 
	In all other rounds $\{2,\ldots,r-1\} \mod r$, $D'$ simply simulates $D$.
	
	The exchange occurring between round $0$ and $1$ (mod $r$) requires this new clogging method, shown in Figure~\ref{fig:end_clog_geq_3}. Compared to the constructions in Theorems~\ref{thm:mid-1-per-r} and \ref{thm:mid-n-ones-per-r}, we separate the bit choices of $\forall$ and the bit guesses of $\exists_0$ and $\exists_1$ into two rounds to leave time for communication of the bits, without using up all messages that are otherwise used to simulate $D$ (since $r \geq 3$).

	By similar arguments to previous theorems, $D'$ clogs all significant channels thus this reduction preserves forced win-ability.
\end{proof}


\begin{theorem}
\label{thm:end_clog_eq_2}
TDGC is undecidable with a 1 bit end-round exchange every $r=2$ rounds: policies $P$ where
$P_\textsc{mid}(p) = (0,0)$
and
$P_\textsc{end}(p) = \begin{cases}
	(1,1), &\text{if } p \equiv 0 \mod 2\\
	(0,0) &\text{if } p \equiv 1 \mod 2\\
\end{cases}$.
\end{theorem}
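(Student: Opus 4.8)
The plan is to reduce from the zero-communication Team DFA Game, ``clogging the channel'' in the style of Theorems~\ref{thm:mid-1-per-r} and \ref{thm:end_clog_geq_3}, but squeezed onto a period-$2$ schedule. The obstacle that makes $r=2$ genuinely different is that two rounds is not enough room to both waste the $\exists$ players' exchange and still leave rounds in which to simulate $D$: if the exchange at the end of an even round were made to relay the $\forall$-bits just fed into the simulated $D$, then $\exists_1$ would learn $\exists_0$'s current $D$-input with a one-round delay, and hence over the whole game would come to know $\exists_0$'s entire observation history (one round late) --- essentially complete communication, which would make the game \emph{decidable} (cf.\ Section~\ref{sec: Decidability}) rather than preserving undecidability. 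So the clogging ``garbage'' occupying each exchange must be routed through rounds in which $D$ is \emph{not} being advanced.

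I would build $D'$ by augmenting $D$ with the policy state $p\in\{0,1\}$ of $C_2$, two storage bits $(s_0,s_1)$ initialised to $(0,0)$, a constant number of auxiliary bits, and a one-bit ``started'' flag initialised false. The round behaviour is: (i) on the first round only (flag false, hence $p=0$), $D'$ copies $\forall$'s bits into $(s_0,s_1)$, ignores the $\exists$ moves, leaves $q$ unchanged, and sets the flag; (ii) on every round with $p=1$, $D'$ tests $m_1=s_0\land m_0=s_1$ --- entering a sink state in $F_\forall$ if the test fails --- and then replaces $(s_0,s_1)$ by the $\forall$-bits of that round, again leaving $q$ unchanged; (iii) on every round with $p=0$ and the flag set, $D'$ simulates exactly one round of $D$ on $(b_0,b_1,m_0,m_1)$. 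Thus $D$ is simulated in rounds $2,4,6,\ldots$, infinitely often, while round $0$ and the odd rounds do the clogging as in Figure~\ref{fig:end_clog_eq_2}: the $\forall$-bits presented in round $0$ are tested in round $1$, and the $\forall$-bits presented in an odd round are tested two rounds later, each time with exactly one exchange --- the one at the end of an even round --- falling strictly in between.

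Correctness has the two usual directions. Forward: given a forced-win $\exists$-strategy on $D$, play the $D$-strategy in the simulated rounds (discarding everything else the player has observed) and, across each exchange, have $\exists_0$ relay the $b_0$ it saw in the pending clogging round and $\exists_1$ relay its $b_1$; then every test passes and the simulated $D$ runs exactly as under the original strategy against the corresponding $\forall$ play. Reverse: in any play consistent with a forced-win $\exists$-strategy on $D'$ every test must pass, so for every $\forall$-strategy $\exists_{1-i}$ must output the adversarially chosen bit $b_i$ of the clogging round then under test; since $\exists_{1-i}$ has no other source for $b_i$, the one-bit message $\exists_i$ sent across the preceding exchange must encode exactly $b_i$, and being one bit it can carry nothing more --- in particular nothing about the simulated $D$, since clogging rounds never advance it. Consequently each $\exists_i$'s move in a simulated round is a function of the $\forall$-bits directed at $\exists_i$ in simulated rounds --- precisely the view of a zero-communication player of $D$ --- together with $D$-irrelevant garbage; fixing that garbage to $0$ extracts a zero-communication strategy on $D$. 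Were some $\forall$-strategy on $D$ to beat this extracted strategy, lifting it to $D'$ (play it in the simulated rounds, play $0$s in the clogging rounds) would beat the $D'$-strategy, a contradiction. Hence $\exists$ has a forced win on $D'$ under $P$ iff on $D$, and undecidability of the Team DFA Game transfers.

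The step I expect to be the crux --- and the reason this case needs its own proof rather than following from Theorem~\ref{thm:end_clog_geq_3} --- is the treatment of the very first exchange. There is no earlier round whose $\forall$-bits could be tested right after it, so had round $0$ been a simulated round that exchange would be genuinely unconstrained and would leak one $D$-relevant bit from $\exists_0$ to $\exists_1$, which already suffices to change which team has a forced win on some instances $D$. Dedicating round $0$ to producing clogging garbage instead repairs this, but round $0$ is \emph{not} distinguishable from the later even rounds by the two-state policy by itself --- which is exactly why the extra ``started'' flag is needed. The one remaining fiddly point is the within-round ordering of case (ii): $\forall$'s bits of a $p=1$ round arrive before the $\exists$ moves the test consumes, so $D'$ must hold them in the auxiliary bits and perform the test against the old $(s_0,s_1)$ before overwriting.
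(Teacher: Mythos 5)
Your construction has the same skeleton as the paper's: odd rounds have $\forall$ issue clogging bits and validate the pair issued two rounds earlier, even rounds simulate $D$, and the single end-of-even-round exchange sitting between issuance and validation is therefore fully occupied relaying those bits. The one place you genuinely depart from the paper is the treatment of the first round. The paper keeps round $0$ as a simulation round and initializes the stored pairs to $(0,0)$ so that the round-$1$ check is passed trivially; as you observe, this leaves the exchange at the end of round $0$ unconstrained by any later validation, and a free bit from $\exists_0$ to $\exists_1$ about a simulated round can change which team has a forced win on some instances $D$, so the map $D \mapsto D'$ as literally described is not an ``iff'' for arbitrary $D$. Your variant --- sacrificing round $0$ as a clogging-bit-generation round, distinguished from later even rounds by a one-bit ``started'' flag, so that the first exchange is already consumed by the round-$1$ check --- closes that gap at the cost of one lost simulation round and a constant amount of extra state; your point that the within-round ordering forces a second buffered pair is also the reason the paper stores two pairs. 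The remainder of your argument (a one-bit message forced to determine an adversarially chosen bit can carry nothing else; restricting a winning TDGC strategy to the simulation rounds with the clogging inputs fixed yields a winning zero-communication strategy on $D$) is the standard clogging argument at the same level of rigor as the paper's. In short: correct, same approach, with a more careful handling of the initial exchange than the paper itself gives.
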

\begin{proof}
    The construction in Theorem~\ref{thm:end_clog_geq_3} for $r \geq 3$ fails when $r=2$ because it requires two rounds without $D'$ simulating $D$.
    To address this, we give the modified construction shown in Figure~\ref{fig:end_clog_eq_2}.
    We augment $D$'s state with two pairs of bits, to remember the two most recent transmissions.
    In an odd round, $D'$ stores $(b_0,b_1)$ from $\forall$ as the most recent pair, and then expects the $\exists$~team to submit bits equal to the least recent pair.
    In an even round, $D'$ simulates $D$.
    To deal with the first two rounds, where there is no previous transmission to validate, we initializing all stored bits to $0$ so the $\exists$~players can just submit $0$ to pass the check.
    
    Although this technique does have in-flight clogging bits during the simulation rounds, every transmission occurs directly before a validation check for those bits, which guarantees that the $\exists$~players must exchange the clogging bits rather than attempt to communicate other information to gain an advantage in the TDG on $D$.
    Therefore, as before, this reduction implies undecidability for $r=2$ as well.
\end{proof}

\begin{figure}[htp]
\centering
\includegraphics[scale=1.00]{figures/tdg_r-k_eq_1.pdf}
\caption{Pattern for the end-round channel clogging gadget. Matching colors denote the flow of clogging bits. The first round's $(m_0,m_1)$ and last round's $(b_0,b_1)$ are unused, and the last round may or may not include an end-exchange.}
\label{fig:tdg_r-k_eq_1}
\end{figure}

\begin{theorem}
TDGC is undecidable with $n$ bits worth of only end-round exchanges across $r>n$ rounds:
where
$P_\textsc{mid}(p) = (0,0)$,
$P_\textsc{end}(p) \in \{(0,0),(1,1)\}$,
and
pre-image size $|P_\textsc{end}^{-1}((1,1))| = n$.

\end{theorem}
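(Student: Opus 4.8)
The plan is to merge the two earlier end-round arguments (Theorems~\ref{thm:end_clog_geq_3} and~\ref{thm:end_clog_eq_2}) into one construction driven by a single ``pending pair'' of bits that is checked and refreshed on every exchange round and carried unchanged across the non-exchange rounds in between. Given a TDG instance $D$ and such a policy $P$ (with policy DFA $C_r$), I would let $D'$ augment the state of $D$ with the state $p\in\{0,\dots,r-1\}$ of $C_r$, a stored pair $(\beta_0,\beta_1)\in\{0,1\}^2$ initialized to $(0,0)$, and an absorbing reject state placed in $F'_\forall$. On a round with $P_\textsc{end}(p)=(0,0)$, $D'$ advances the simulated copy of $D$ one round on $(b_0,b_1,m_0,m_1)$ and leaves $(\beta_0,\beta_1)$ unchanged; on a round with $P_\textsc{end}(p)=(1,1)$, $D'$ does not touch the simulated $D$ at all, instead testing $m_0=\beta_1 \land m_1=\beta_0$ (moving to the reject state on failure) and then overwriting $(\beta_0,\beta_1)\gets(b_0,b_1)$. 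The policy state always advances. Finally set $F'_\exists$ to be the states whose $D$-component lies in $F_\exists$ (and which are not the reject state) and $F'_\forall$ the reject state together with the states whose $D$-component lies in $F_\forall$.

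The honest play for the $\exists$ team is: on an exchange round $\exists_0$ spends its one transmitted bit on $t_0:=b_0$ and $\exists_1$ on $t_1:=b_1$, so afterwards $\exists_0$ knows $b_1$ and $\exists_1$ knows $b_0$; at the next exchange round---be it the very next round or several non-exchange rounds later---$\exists_0$ submits $m_0:=b_1$ and $\exists_1$ submits $m_1:=b_0$, which is precisely the pair $D'$ has been holding. A single stored pair suffices however the $n$ exchange states are distributed around $C_r$, because each exchange round's garbage is consumed at the immediately following exchange round, so at most one pair is ever in flight; and the needed bit is always available in time, being learned at the end of one exchange round and used at the start of a later one. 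At the first exchange round of the game the pending pair is $(0,0)$, so $\exists$ simply submits $(0,0)$, which costs nothing since exchange rounds never touch the simulated $D$. Since $r>n$, every cycle contains at least one non-exchange round, so the simulated $D$ is advanced infinitely often, and on those rounds the information flow is exactly that of TDG (there is no mid-round exchange, and $\exists_i$ learns only $b_i$).

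It remains to check faithfulness: $\exists$ has a forced win on $D$ in TDG iff $\exists$ has one on $D'$ in TDGC under $P$. The forward direction is immediate---$\exists$ plays honest clogging on exchange rounds and its $D$-winning strategy on non-exchange rounds, never failing a check and driving the simulated $D$ into $F_\exists$. For the converse I would follow Theorem~\ref{thm:mid-1-per-r}: in any forced-win $\exists$-strategy on $D'$ the players must relay $b_0,b_1$ faithfully on every exchange round, since if $\exists_0$ ever deviated then $\exists_1$ would hold no information about $b_0$ at the subsequent check and a $\forall$ choosing $b_0$ against $\exists$'s strategy could force the reject state; hence the single transmitted bit is fully consumed each round and no side information crosses between $\exists_0$ and $\exists_1$. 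To make this precise, assume $\exists$ has no forced win on $D$; given any $D'$-strategy $\sigma'$ of $\exists$, let $\sigma$ be the TDG strategy obtained by running $\sigma'$ under the pretense that every exchange-round bit seen so far was $0$ (this fixes the entire exchange-round activity, so $\sigma$ is well defined), choose a $\forall$-strategy $\tau$ in TDG against which $\sigma$ never reaches $F_\exists$, and have $\forall$ on $D'$ play all-zero garbage on exchange rounds and $\tau$ on non-exchange rounds; the resulting $D'$-play restricted to non-exchange rounds is exactly the TDG play of $\sigma$ against $\tau$, so it never reaches $F'_\exists$, and $\sigma'$ is not a forced win. Undecidability of TDGC under $P$ then follows from undecidability of TDG. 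The main obstacle is this converse---specifically, certifying that the garbage bits the $\exists$ players do learn about each other's $\forall$-inputs cannot be turned into an advantage in the simulated game---which the all-zero-garbage $\forall$ strategy settles; the rest is timing bookkeeping and tracking which policy states are exchange states. (As a sanity check, the construction specializes to Theorems~\ref{thm:end_clog_geq_3} and~\ref{thm:end_clog_eq_2} at $n=1$, $r\ge3$ and $n=1$, $r=2$.)
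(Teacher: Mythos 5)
Your gadget is sound and genuinely different from the paper's. The paper chains the two-round check of Theorem~\ref{thm:end_clog_geq_3} through each contiguous stretch of exchange rounds (overlapping the ``unused'' slots) and stitches the stretch boundaries with the technique of Theorem~\ref{thm:end_clog_eq_2}, so that every clogging pair is validated in the round \emph{immediately} after it can first be exchanged --- the paper explicitly leans on that immediacy. You instead keep a single pending pair, refresh it at every exchange round, and validate it only at the \emph{next} exchange round, letting all $r-n$ non-exchange rounds per period simulate $D$. This is more uniform and avoids the stretch-by-stretch case analysis, and the delay is harmless: between the creation of a pending pair at exchange round $k_j$ and its validation at $k_{j+1}$ there is still exactly one end-of-round transmission in each direction (at the end of round $k_j$), so the information-theoretic pinning of $t_i$ to the garbage bit $b_i$ goes through unchanged. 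The handling of the first check (against the initialized $(0,0)$) is also correct.

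The one genuine problem is in your ``to make this precise'' converse. Fixing $\forall$'s exchange-round inputs to $0$ does \emph{not} ``fix the entire exchange-round activity'': the transmissions $t_0,t_1$ at an exchange round are functions of each player's full view, including the $b_i$ values from earlier \emph{simulation} rounds, which depend on $\tau$. So for an arbitrary $\sigma'$ the induced $\sigma$ is not a legal TDG strategy --- $\exists_0$'s simulation-round moves may depend, via $t_1$, on bits only $\exists_1$ saw --- and you cannot invoke ``no forced win on $D$'' to produce a $\tau$ defeating it; a $\sigma'$ that spends the channel on real communication might well beat every all-zero-garbage $\forall$ and lose only to one that plays adversarial garbage. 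The argument has to run in the other order: first the clogging step you state informally (against adversarial garbage, passing every check forces $t_1$ at round $k_j$ to determine $b_1$ given $\exists_0$'s view, hence to carry nothing else), and only \emph{then}, for such clogging-compliant strategies, does the all-zero restriction yield a well-defined TDG strategy to hand to the no-forced-win hypothesis. This leaves your converse at essentially the same level of informality as the paper's own (which likewise just asserts that the clogging guarantees carry over), so the reduction stands, but the ``precise'' version as written does not close the converse on its own.
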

\begin{proof}
	Again reducing from TDG, since $r>n$, we have at least one round per period with no end-exchange, and assuming $n \geq 1$ at least one round with an end-exchange (otherwise this is equivalent to TDG).
	We combine the techniques from Theorems~\ref{thm:end_clog_geq_3}~and~\ref{thm:end_clog_eq_2} to construct a $D'$ which handles each stretch of contiguous rounds with end-round exchanges and those without.
	
	Our gadget which does this is described by the pattern in Figure~\ref{fig:tdg_r-k_eq_1}.
	Leading up to a round with no end-exchange, we repeat the technique from Figure~\ref{fig:end_clog_geq_3} (possibly zero times), using the previously ``unused'' bits to overlap.
	In the last round in the stretch of no-end-exchange rounds (which may be the only such round), the pattern finishes with the technique from Figure~\ref{fig:end_clog_eq_2}, guaranteeing one round is available to simulate $D$ per period.
	Following this, another instance of the gadget begins, repeating ad infinitum.
	
	The clogging guarantees from Theorems~\ref{thm:end_clog_geq_3}~and~\ref{thm:end_clog_eq_2} still hold of this combined gadget, since the validation of each clogging bit still occurs directly after it can be first exchanged. Inductively assuming the previous checks successfully clogged earlier transmissions, the $\exists$~players cannot communicate any strategically beneficial information to each other, so whether or not they have a winning strategy is undecidable.
\end{proof}

}

\section{Undecidability of General Communication Games}
\label{sec:general}
\ifabstract
\later{
\section{Undecidability of General Communication Games}
\label{app:general}
}
\fi

This section proves our main result: that a broad class of policies with sufficiently low communication rate remain undecidable for the Team DFA game. We now define this more general notion.

\begin{define}
	A policy is \textbf{$(r,x_0,x_1,N)$-rate-limited} if, after a fixed number of rounds $N$, the rate of transmission from player $\exists_i$ to $\exists_{1-i}$ is $x_i$ during every period of $r$ rounds. Specifically, it must satisfy
	$x_i = \sum\limits_{k = k_0}^{k_0 + r-1} P_\textsc{mid}(p_k)[i] + P_\textsc{end}(p_k)[i]$
	for $k_0 = N + \ell r$,
	where
	$\ell \in \mathbb{N}$
	and
	$p_k$ is the policy state on round $k$.
\end{define}

This now allows us to state the main theorem.

\begin{restatable}[]{theorem}{undecidable}
	\label{thm:tdgc-undecidable-xi-lt-r}
	TDGC is undecidable under all $(r,x_0,x_1,N)$-rate-limited policies where $x_0,x_1 < r$.
\end{restatable}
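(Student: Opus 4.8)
The plan is to reduce from the zero-communication Team DFA Game, generalizing the ``channel-clogging'' technique from Section~\ref{sec:basic} to an arbitrary rate-limited policy. Given a policy $P$ that is $(r,x_0,x_1,N)$-rate-limited with $x_0,x_1<r$, and a TDG instance $D$, I would build a TDGC instance $D'$ that augments $D$'s state with the policy DFA state (which has finitely many states by assumption) plus a bounded buffer of recently transmitted bits awaiting validation. The key quantitative observation is that since $x_i<r$, within each length-$r$ window of rounds the $\exists_i\to\exists_{1-i}$ channel carries strictly fewer bits than there are rounds, so there is ``room'' for at least one round per period in which $D'$ can faithfully simulate one round of $D$ while all other rounds are used purely for clogging. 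The correctness claim is, as in the special cases, that the $\exists$ team has a forced win on $D'$ under $P$ iff they have a forced win on $D$ with no communication: the clogging bits convey no useful information because each is pinned by a later validation check, and the simulated $D$-rounds are untouched by communication.

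The main technical work is the scheduling/assignment argument: I need to exhibit, for every sufficiently long stretch of rounds, an explicit assignment of each available transmission bit (whether it comes from a $P_\textsc{mid}$ or a $P_\textsc{end}$ exchange) to a ``clogging obligation,'' i.e. a pair $(\text{source round}, \text{validation round})$, together with a choice of which $r-x_i \ge 1$ rounds per period are the ``simulation rounds.'' This should be done greedily along the cyclic part of the policy DFA (handling the initial chain of length $\le N$ plus a bounded prefix separately, where the $\exists$ players simply submit $0$s to pass vacuous checks, exactly as in Theorem~\ref{thm:end_clog_eq_2}). The buffer of in-flight clogging bits must be shown to have bounded size — this is where $x_i<r$ is essential, since it bounds how far ahead of its validation a bit can be created — so that $D'$ remains a finite-state DFA. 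I would phrase the construction as combining the mid-round clog of Figure~\ref{fig:mid_clog}, the $r\ge 3$ end-round clog of Figure~\ref{fig:end_clog_geq_3}, and the $r=2$ overlapping end-round clog of Figure~\ref{fig:end_clog_eq_2}, applied piecewise to maximal contiguous blocks of exchange-bearing rounds, mirroring the final theorem of Section~\ref{sec:basic} but now interleaving both mid- and end-exchanges and both channel directions.

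The correctness proof then has two directions. For the ``if'' direction (forced win on $D$ implies forced win on $D'$): the $\exists$ team plays their $D$-strategy on the designated simulation rounds, and on every other round each player sends exactly the clogging bit $D'$ demands (which is a bit that player has already learned — a $b$-value it received or a previously-buffered value), so they never enter $F_\forall$ via a failed check and win whenever the simulated $D$ reaches $F_\exists$. For the ``only if'' direction: given any $\exists$ strategy on $D'$ that forces a win against all $\forall$ play, I argue by induction along rounds that on a winning play the $\exists$ players must be transmitting exactly the demanded clogging bits (any deviation lets $\forall$, who controls the to-be-validated bits, drive $D'$ into $F_\forall$), hence each transmission is a deterministic function of information the recipient can already reconstruct, so the communication channel is strategically vacuous; restricting the strategy to the simulation rounds yields a forced win on $D$. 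I expect the main obstacle to be making the ``each transmitted bit is forced, hence carries no information'' argument fully rigorous in the presence of overlapping in-flight bits and both exchange types simultaneously — in particular, setting up the induction hypothesis so that it simultaneously covers all pending validations and handles the boundary between the policy's initial chain and its cycle; the counting inequality $x_i<r$ does the heavy lifting, but packaging it into a clean invariant about the buffer contents is the delicate part.
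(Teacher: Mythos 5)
Your high-level plan is the same as the paper's: reduce from the zero-communication TDG by building a $D'$ that clogs every transmission slot with forced bits, using $x_i<r$ to leave room for simulation rounds. However, there is a genuine gap at the point you wave at only briefly: \emph{what happens when the game ends}. Your argument that "each transmitted bit is pinned by a later validation check" fails for the final rounds of a winning execution. When the simulated $D$ reaches $F_\exists$ and $D'$ halts, any clogging bits that were transmitted but whose validation round lies after the halt are never checked; since $\forall$ can no longer punish a deviation on those bits, the corresponding transmission slots become a free channel, and the $\exists$ players can use them to send genuine information that informs their last few simulated moves. This breaks the "only if" direction: a forced win on $D'$ need not restrict to a forced win on $D$. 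Closing this loophole is where most of the paper's machinery lives: a build-up phase that fills queues $X_0,X_1$ with $r^2$ pending bits, and a tear-down phase (boosting then draining) in which $D'$ continues past the simulated victory and validates \emph{every} outstanding bit before declaring an $\exists$ win, so that the total transmission capacity over the whole execution exactly equals the total number of validated bits (the column-sum accounting in Table~\ref{tbl:d-prime-accounting}).

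Two further points your proposal does not address. First, the drain must be scheduled so that at no prefix does the cumulative transmission capacity exceed the cumulative number of bits already demanded for validation; because $P$ can deliver bits in bursts (e.g.\ a mid- and an end-exchange in the same round), a greedy assignment does not obviously work, and the paper needs a combinatorial lemma about rotations of the per-half-round transmission sequence (Lemma~\ref{lem:partials-leq-linear-even} and Corollary~\ref{lem:partials-leq-linear-odd}) to choose the offset $k_{end}$ within a period at which the drain can terminate. Second, the honest $\exists$ players must know when $D'$ switches from clogging to tear-down, but that switch depends on whether $q\in F_\exists$, which they cannot observe; the paper resolves this by dedicating every other period's final round to forcing $\forall$ to announce $[q\in F_\exists]$ to both players (on pain of an immediate $\exists$ win), which is also why the construction simulates one $D$-round every two periods rather than one per period as you propose. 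Without these three ingredients the reduction as you describe it is not sound.
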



\subsection{Properties of Rate-Limited Policies}
\label{sec:Properties of Rate-Limited Policies}

\ifabstract
\later{
\subsection{Proofs of Rate-Limited Policies}
\label{app:Proofs of Rate-Limited Policies}
}
\fi

Before proceeding to the proof of Theorem~\ref{thm:tdgc-undecidable-xi-lt-r}, we will establish useful lemmas about rate-limited policies. \ifabstract Proofs can be found in Appendix~\ref{app:Proofs of Rate-Limited Policies}.\fi

First, we have the following two simple observations:

\both{
\begin{lemma}
\label{lem:policy-dfa-rate-limited}
	Any policy implemented as a unary-alphabet DFA with $n>1$ states is $(r,x_0,x_1,N)$-rate-limited for some $1 < r \leq n$ and some initial segment of length $N \leq n$.
\end{lemma}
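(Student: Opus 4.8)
The plan is to exploit the $\rho$-shaped structure of unary-alphabet DFAs already observed in the paper (the discussion surrounding Figure~\ref{fig:policy-dfa}): starting from $p_0$, the states the policy actually visits form an initial chain of some length $t \geq 0$ of distinct states followed by a cycle of some length $c \geq 1$, with $t + c \leq n$. First I would record the immediate consequence that the infinite sequence of policy states $p_0, p_1, p_2, \ldots$ (where $p_{k+1} = \pi(p_k,1)$) is eventually periodic with period $c$; concretely $p_{k+c} = p_k$ for every $k \geq t$. Unreachable states of the DFA are irrelevant, since this sequence never visits them.

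Next I would transfer this periodicity to the communication pattern. For each $i \in \{0,1\}$, set $a^{(i)}_k := P_\textsc{mid}(p_k)[i] + P_\textsc{end}(p_k)[i]$, the number of bits $\exists_i$ transmits to $\exists_{1-i}$ on round $k$. Since $a^{(i)}_k$ depends only on $p_k$, the sequences $(a^{(0)}_k)_{k \geq t}$ and $(a^{(1)}_k)_{k \geq t}$ are both periodic with period $c$.

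Then I would choose the parameters. Let $N := t$; since $c \geq 1$ and $t + c \leq n$, we get $N = t \leq n - c \leq n - 1 < n$. Let $r := c$ if $c \geq 2$, and $r := 2$ otherwise; in both cases $r$ is a positive multiple of $c$ with $1 < r \leq n$, where the bound $r \leq n$ uses $c \leq n$ in the first case and $n > 1$ in the second. Finally, for any $\ell \in \mathbb{N}$ and $k_0 = N + \ell r$, the window $\{k_0, k_0+1, \ldots, k_0 + r - 1\}$ consists entirely of indices $\geq t$ and has length $r$, a multiple of the period $c$; hence $\sum_{k=k_0}^{k_0+r-1} a^{(i)}_k$ equals $\tfrac{r}{c}$ times the sum of $a^{(i)}$ over one full period, a value independent of $\ell$. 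Defining $x_i$ to be this common value yields exactly the defining identity of an $(r,x_0,x_1,N)$-rate-limited policy.

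I do not expect a genuine obstacle here; the only point requiring care is the degenerate case where the cycle is a single self-loop ($c = 1$), which would give $r = 1$ and violate the requirement $1 < r$. This is handled by passing to $r = 2$ and invoking $n > 1$ (and, more generally, one may take $r$ to be any multiple of $c$ lying in the range $(1, n]$, which is nonempty precisely because $c \leq n$ and $n > 1$).
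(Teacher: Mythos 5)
Your proposal is correct and follows essentially the same route as the paper's proof: identify the $\rho$-shaped chain-plus-cycle structure of the reachable states, set $N$ to the chain length and $r$ to the cycle length (doubling to $r=2$ in the self-loop case, using $n>1$), and take $x_i$ to be the sum of per-round transmissions over one period. Your version is slightly more explicit about why the window sum is independent of $\ell$ and why $r\leq n$, but there is no substantive difference.
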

}

\later{
\begin{proof}
	Refer back to the diagram in Figure~\ref{fig:policy-dfa}.
	Consider the unary-alphabet DFA as a directed graph on states, each with exactly one outgoing edge representing the next transition. Any path starting at the start state $p_0$ cannot have length at least $n$ without repeating some state $p_N$, so all sufficiently-long paths must first traverse an initial segment $(p_0,\ldots,p_{N-1})$ of $N \leq n$ states then go around the simple cycle $(p_N,\ldots,p_{N+r},p_N)$ of $r \leq n$ states forever after. If the simple cycle is a self-loop, then instead use the non-simple cycle composed of going around the self-loop twice to have $r>1$.
	
	Going around this cycle of $r$ states once will permit transmission of a fixed number of bits $x_i = \sum\limits_{k = N}^{N+r} P_\textsc{mid}(p_k)[i] + P_\textsc{end}(p_k)[i]$ from player $\exists_i$ to $\exists_{1-i}$, thus defining a period that makes the policy $(r,x_0,x_1,N)$-rate-limited.
\end{proof}
}

\both{
\begin{lemma}
\label{lem:policy-period-double}
	Any $(r,x_0,x_1,N)$-rate-limited policy is also $(2r,2x_0,2x_1,N)$-rate-limited if $r>1$.
\end{lemma}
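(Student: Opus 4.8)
The plan is to argue directly from the definition of $(r, x_0, x_1, N)$-rate-limited, unwinding it for the period $2r$ and regrouping the sum. Recall that $(r, x_0, x_1, N)$-rate-limitedness says that for every $\ell \in \mathbb{N}$, setting $k_0 = N + \ell r$, we have $x_i = \sum_{k=k_0}^{k_0 + r - 1} \big(P_\textsc{mid}(p_k)[i] + P_\textsc{end}(p_k)[i]\big)$. I want to show the analogous statement with $r$ replaced by $2r$, $x_i$ replaced by $2x_i$, and the same $N$: that is, for every $\ell' \in \mathbb{N}$, setting $k_0' = N + \ell'(2r)$, we have $2x_i = \sum_{k=k_0'}^{k_0' + 2r - 1} \big(P_\textsc{mid}(p_k)[i] + P_\textsc{end}(p_k)[i]\big)$.

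First I would fix $i \in \{0,1\}$ and $\ell' \in \mathbb{N}$ and write $k_0' = N + \ell'(2r) = N + (2\ell') r$. Then I split the range $\{k_0', \ldots, k_0' + 2r - 1\}$ into the two consecutive blocks $\{k_0', \ldots, k_0' + r - 1\}$ and $\{k_0' + r, \ldots, k_0' + 2r - 1\}$. The first block is exactly the window for the $r$-period definition with $\ell = 2\ell'$ (since $k_0' = N + (2\ell') r$), so its contribution is $x_i$. The second block starts at $k_0' + r = N + (2\ell' + 1) r$, which is the window for the $r$-period definition with $\ell = 2\ell' + 1$, so its contribution is also $x_i$. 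Adding, the total is $2x_i$, which is what we wanted. The hypothesis $r > 1$ is inherited from the assumption and is only needed to keep us within the regime where ``period'' makes sense (so that the doubled period $2r > 1$ as well); the arithmetic itself does not otherwise use it.

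There is essentially no obstacle here — the only thing to be careful about is bookkeeping on the index ranges, namely that $N + (2\ell')r$ and $N + (2\ell'+1)r$ range over exactly the values of the form $N + \ell r$ with $\ell$ even, respectively odd, so that every window of length $r$ inside a $2r$-window is genuinely one of the windows guaranteed by the original hypothesis. Since the original condition is quantified over \emph{all} $\ell \in \mathbb{N}$, both the even and odd cases are covered, and the sum decomposition is just additivity of a finite sum over a partition of its index set into two contiguous halves. So the proof is a two-line index manipulation once the definitions are unwound.
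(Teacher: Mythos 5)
Your proof is correct and is essentially the same argument as the paper's, which simply observes that repeating the length-$r$ cycle twice yields a length-$2r$ period with $2x_0$ and $2x_1$ bits, starting after the same $N$ steps; you have just made the index bookkeeping (splitting the $2r$-window into the $\ell = 2\ell'$ and $\ell = 2\ell'+1$ windows) explicit. Your observation that the hypothesis $r>1$ plays no role in the arithmetic is also accurate.
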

}

\later{
\begin{proof}
	Given a cycle of length $r$ starting after $N$ steps, repeating the cycle twice results in length $2r$ periods with $2x_0$ and $2x_1$ bits of transmissions also starting after $N$ steps.
\end{proof}
}

Next, we will need the following property bounding the partial sums of certain repeated finite sequences for analyzing the transmission rates in each part of a round across a period.

\begin{define}
	Let $a$ be any sequence of $2n$ natural numbers
	$(a_0,a_1,\ldots,a_{2n-1})$
	with sum at most $n-1$, and let $i$ be an index into $a$.
	\Call{rotate-bounded}{$a, i$} is the predicate that holds when the infinite sequence
	$b^{(i)}_j = a_{(i+j \mod 2n)}$
	with partial sums
	$B^{(i)}_j = \sum\limits_{k=0}^{j-1} b^{(i)}_k$
	satisfies
	$\forall j>0.\ B^{(i)}_j < \frac{j}{2}$.
\end{define}

\begin{lemma}
\label{lem:partials-leq-linear-even}
	For any such $a$, \Call{rotate-bounded}{$a, i$} holds for some even index $i$.
\end{lemma}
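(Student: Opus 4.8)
The plan is to set up a potential/averaging argument over the $2n$ rotations of $a$. For each index $i\in\{0,\ldots,2n-1\}$ define the shortfall $D^{(i)}_j = \tfrac{j}{2} - B^{(i)}_j$; the predicate \Call{rotate-bounded}{$a,i$} holds exactly when $D^{(i)}_j > 0$ for all $j>0$. The key structural fact is that partial sums over a full period telescope: since $\sum_{k=0}^{2n-1} a_k \le n-1 < n$, we have $B^{(i)}_{2n} \le n-1$ for every $i$, so $D^{(i)}_{2n} \ge 1 > 0$, and after any multiple of the period the shortfall only grows. Hence the only way \Call{rotate-bounded}{$a,i$} can fail is for some $j$ with $0<j<2n$; and because $B^{(i)}_j$ and $\tfrac{j}{2}$ differ by a half-integer when $j$ is odd, failure for odd $j$ actually requires $B^{(i)}_j \ge \lceil j/2 \rceil$, while for even $j$ it requires $B^{(i)}_j \ge j/2$.

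Next I would run the standard "cycle lemma" style argument: consider the doubled-length walk and pick the rotation starting at a point of minimum partial sum. Concretely, look at the finite sums $S_m = \sum_{k=0}^{m-1} a_k$ for $m=0,\ldots,2n$ (so $S_0=0$, $S_{2n}\le n-1$), and more usefully track $T_m = 2 S_m - m$, the "doubled shortfall deficit." We want a starting index $i$ with $T_{i+j} - T_i \le 0$ — equivalently $2B^{(i)}_j \le j$ — for all $j$ in one period, and then strict inequality will come from the half-integer parity slack plus the telescoping surplus. Choosing $i$ to be an index where $T_i$ is maximal over a full period forces $T_{i+j}\le T_i$ for all $j\ge 0$ within that period (wrapping around once, using $T_{2n} = 2S_{2n}-2n \le -2 < 0 = T_0$ to control the wrap), which gives $2B^{(i)}_j = T_{i+j+\text{(wrap terms)}} - T_i + \text{const} \le j$; one then upgrades $\le$ to $<$ by noting that whenever $2B^{(i)}_j = j$ (only possible for even $j$) the telescoping deficit on the remaining part of the period would contradict maximality of $T_i$ unless we re-pick $i$ slightly, and among all maximizers at least one works.

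The remaining issue is the parity constraint: we need the witness index $i$ to be \emph{even}, whereas the cycle-lemma argument a priori only hands us \emph{some} index. Here I would exploit the freedom that the strict inequality we proved is not tight at $j=1$: if the cycle-lemma witness $i$ is odd, I would show that $i-1$ (or $i+1$, taken mod $2n$) also works. Indeed, shifting the start by one position changes each partial sum by at most $a_{i-1}$ at the front, and the only place the bound $B_j < j/2$ could be endangered is $j=1$, where the new first term is $a_{i-1}$; but if $a_{i-1}\ge 1$ then one can argue (again by the half-integer slack and the global budget $\sum a_k \le n-1$) that starting at $i$ already had so much slack downstream that $i-1$ still satisfies all partial-sum bounds, while if $a_{i-1}=0$ the shift changes nothing relevant at the front and $i-1$ trivially inherits \Call{rotate-bounded}{}. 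Either way we land on an even index. I expect this parity-repair step to be the main obstacle, since it requires carefully quantifying how much slack a cycle-lemma witness has; the cleanest route is probably to prove a slightly stronger statement by induction on $n$ — e.g. that among the $2n$ rotations, the number of valid start indices is at least two and not all of them can have the same parity — which makes the even witness fall out automatically and sidesteps delicate case analysis.
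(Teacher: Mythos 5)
Your main engine is the same as the paper's: both arguments track the drift-corrected partial sums (your $T_m=2S_m-m$ is just twice the paper's $C^{(0)}_j=B^{(0)}_j-\tfrac j2$), use the budget $\sum a_k\le n-1$ to show the potential strictly decreases across each full period, and start the rotation at a maximizer of the potential over one period, taking the \emph{last} maximizer so that the inequality $B^{(i)}_j\le j/2$ becomes strict. Up to that point the proposal is sound and matches the paper.

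The gap is in your parity-repair step, and it is a real one. Your backward shift is wrong in the case you wave at: if the odd witness is $i$ and $a_{i-1}\ge 1$, then $B^{(i-1)}_1=a_{i-1}\ge 1>\tfrac12$, so \Call{rotate-bounded}{$a,i-1$} fails immediately at $j=1$; no amount of ``downstream slack'' can rescue it, contrary to your claim. The correct repair goes \emph{forward}, and it needs two observations you don't supply. First, when the last maximizer $j^*$ is odd one gets $C^{(j^*)}_1=a_{j^*}-\tfrac12<0$, which forces $a_{j^*}=0$; hence $C^{(0)}_{j^*+1}=C^{(0)}_{j^*}-\tfrac12$ and, since all later values sit at least $\tfrac12$ below the maximum, $j^*+1$ attains the maximum of $C^{(0)}$ over indices $>j^*$. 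Second, one must take $j'$ to be the \emph{last} index in $[j^*+1,2n)$ attaining that value -- otherwise strictness can fail at a later tie -- and $j'$ is automatically even because $C^{(0)}_{j'}$ is an integer (it equals the half-integer $C^{(0)}_{j^*}$ minus $\tfrac12$) while $B^{(0)}_{j'}$ is an integer. Your fallback suggestion (induction on $n$, or ``not all witnesses have the same parity'') is not developed and would need its own proof; as written the parity step does not go through.
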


\begin{proof}
    Let $C^{(i)}_j = B^{(i)}_j - \frac{j}{2}$. Let us find an even index $i$ such that $C^{(i)}_j < 0$ for all $j>0$,
    and consider the largest length $j^*<2n$ which maximizes $C^{(0)}_{j^*}$.
    If $C^{(0)}_{j^*} < 0$ then $i=0$ satisfies the claim,
    so suppose $C^{(0)}_{j^*} \geq 0$.
    
    Because the sum of $a$ is at most $n-1$, notice that for any $j$, $j+2n$ in the next period satisfies
    $C^{(0)}_{j+2n}
    = B^{(i)}_{j+2n} - \frac{j+2n}{2}
    < \left(B^{(i)}_{j} + n\right) - \left(\frac{j}{2} + n\right)
    = C^{(0)}_{j}$.
    Since $j^*$ is the largest maximizer in the first period,
    for all $j > j^*$:
    \begin{equation*}
	    0
	    > C^{(0)}_{j} - C^{(0)}_{j^*}
	    = \left(B^{(0)}_j - \frac{j}{2}\right) - \left(B^{(0)}_{j^*} - \frac{j^*}{2}\right)
	    = B^{(j^*)}_{j-j^*} - \frac{j - j^*}{2}
	    = C^{(j^*)}_{j-j^*}\\
    \end{equation*}
    and therefore $\forall j > 0.\ C^{(j^*)}_{j} < 0$,
    thus $i=j^*$ is an index for which $a$ is rotate bounded. If $j^*$ is even we are done.
    If $j^*$ is odd, then we know that $j^*<2n-2$ because we supposed that $C^{(0)}_{j^*} \geq 0$ whereas $C^{(0)}_{2n-1} < 0$:
    \begin{equation*}
    C^{(0)}_{2n-1}
    = B^{(0)}_{2n-1} - \frac{2n-1}{2}
    \leq \left(n-1 - a_{2n-1}\right) - \left(n - \frac{1}{2}\right)
    = - a_{2n-1} - \frac{1}{2}
    < 0
    \end{equation*}
    Thus, consider the even $j^*+1$ and let $j'\in [j^*+1,2n)$ be the maximum length such that $C^{(0)}_{j'} = C^{(0)}_{j^*+1}$. 
    $
	    0
	    > C^{(0)}_{j^*+1} - C^{(0)}_{j^*}
	    = C^{(j^*)}_{1}
	    = b^{(j^*)}_{0} - \frac{1}{2}$
	    therefore
	    $b^{(j^*)}_{0} = 0$
	    and
	    $C^{(0)}_{j'}
	    = C^{(0)}_{j^*+1}
	    = C^{(0)}_{j^*} - \frac{1}{2}
    $. 
    Since $B^{(0)}_{j^*}$ is an integer and $j^*$ is odd, $j'$ must also be even,
    and by similar arguments $\forall j > j'.\ C^{(0)}_{j'} > C^{(0)}_{j}$,
    thus $\forall j > 0.\ C^{(j')}_{j} < 0$,
    so $i=j'$ satisfies the claim.
\end{proof}

\later{
\begin{figure}[htp]
\centering
\includegraphics[width=0.45\textwidth,clip,trim=0.4in 0.6in 0.4in 0.8in]{figures/plot_rot_seq_even.pdf}
\hspace{1em}
\includegraphics[width=0.45\textwidth,clip,trim=0.4in 0.6in 0.4in 0.8in]{figures/plot_rot_seq_odd.pdf}
\caption{Two examples of Lemma~\ref{lem:partials-leq-linear-even} with $n=10$.
The blue line shows two periods of the partial sums $B^{(0)}_j$, separated by vertical green lines,
the black line shows $y=x/2$, which was shifted up to the red line to pass through the circled point $(i,B^{(0)}_{i})$ for $i=j^*$ on the left and $i=j'$ on the right (with the orange line showing the odd $j^*$ we couldn't use).}
\label{fig:partials-leq-linear-even}
\end{figure}
}

\begin{corollary}
\label{lem:partials-leq-linear-odd}
	For any such $a$, \Call{rotate-bounded}{$a, i$} holds for some odd index $i$.
\end{corollary}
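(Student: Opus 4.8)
The plan is to obtain the odd-index version essentially for free from Lemma~\ref{lem:partials-leq-linear-even} by a cyclic shift of the sequence, so that no new estimates are required. Concretely, given $a = (a_0, a_1, \ldots, a_{2n-1})$ satisfying the hypotheses of the definition (sum at most $n-1$), I would form the once-shifted sequence $a'$ defined by $a'_k = a_{(k+1)\bmod 2n}$. The entries of $a'$ are a permutation of those of $a$, so $a'$ is again a sequence of $2n$ natural numbers with sum at most $n-1$, and Lemma~\ref{lem:partials-leq-linear-even} applies to it, producing an \emph{even} index $i'$ for which \Call{rotate-bounded}{$a', i'$} holds.

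Next I would unwind the definitions to transport rotate-boundedness of $a'$ back to $a$. Writing $\tilde b^{(i')}_j = a'_{(i'+j)\bmod 2n}$ for the infinite sequence generated by $(a', i')$, we have for every $j$ that $\tilde b^{(i')}_j = a_{(i'+j+1)\bmod 2n} = b^{((i'+1)\bmod 2n)}_j$, so the infinite sequences — and hence their partial sums — generated by $(a', i')$ and by $(a, (i'+1)\bmod 2n)$ agree term by term. Therefore \Call{rotate-bounded}{$a, (i'+1)\bmod 2n$} holds. Finally, since $i'$ is even with $0 \le i' \le 2n-2$, the index $(i'+1)\bmod 2n$ equals $i'+1 \in \{1,3,\ldots,2n-1\}$, which is odd, establishing the corollary.

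The only thing to be careful about is the index bookkeeping for the shift — in particular that $i'+1$ stays in range and flips parity — since all the substantive work is already contained in Lemma~\ref{lem:partials-leq-linear-even}. An equivalent route would be to shift in the other direction, $a'_k = a_{(k-1)\bmod 2n}$, which sends an even $i'$ to the odd index $(i'-1)\bmod 2n$ (using $2n-1$ when $i'=0$); either choice works, and I see no genuine obstacle.
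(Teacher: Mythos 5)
Your proof is correct and is essentially identical to the paper's: both define the cyclically shifted sequence $a' = (a_1,\ldots,a_{2n-1},a_0)$, apply Lemma~\ref{lem:partials-leq-linear-even} to get an even index $i'$, and observe that the infinite sequence generated by $(a',i')$ coincides with that generated by $(a, i'+1)$, yielding the odd index $i'+1$. Your version just spells out the index bookkeeping that the paper leaves implicit.
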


\begin{proof}
	Consider $a' = (a_1,a_2,\ldots,a_{2n-1},a_0)$. By Lemma~\ref{lem:partials-leq-linear-even}, there is an even index $i'$ which satisfies \Call{rotate-bounded}{$a', i'$}. Thus $i \equiv i'+1$ is an odd index such that \Call{rotate-bounded}{$a, i$}.
\end{proof}


\subsection{Construction Outline}
First, we introduce our reduction from the Team DFA Game to the Team DFA Game with Communication.
Given an $(r,x_0,x_1,N)$-rate-limited policy $P$ and an underlying DFA $D$, we create a DFA $D'$ for playing the TDGC under $P$ which simulates playing the TDG on $D$ while completely clogging the communication between the $\exists$ team to nullify any advantage such communication could bring.
This lets us conclude that a winning strategy for TDGC on $D'$ exists exactly when a winning strategy exists for TDG on $D$.

The reduction applies when each $x_i < r$, meaning the communication rate defined by $P$ is eventually below an average of one bit per round.
We also assume $r>1$ and each $x_i>0$: if there is no communication at all then TDGC is identical to TDG, and if communication only occurs in one direction then the aspects of the construction that deal with the silent direction may be omitted.
Lastly, by Lemma~\ref{lem:policy-period-double}, we take period length $r$ to be even without loss of generality.

The code for $D'$ is fully shown in Algorithm~\ref{alg:tdgc-d-prime-update}. 
The behavior of $D'$ is designed around what we call the \emph{honest} strategy for the $\exists$ team. We will show that it is the only strategy that guarantees the $\exists$ team will pass validation checks by $D'$, but also that it requires using all available transmission bits, resulting in no information transfer between players for their additional benefit in the simulated TDG.

Along with the current state of $D$ in the TDG, $D'$ maintains two queues $X_0,X_1$ of clogging bits that have been given to each $\exists$ player by the adversary $\forall$ in specific rounds. These bits are expected to be submitted by the opposite $\exists$ player to $D'$ for validation in later rounds in order to avoid losing the game, so the players are forced to use transmission bandwidth to exchange this information.
The honest $\exists_i$ player sends these bits directly and as soon as possible to $\exists_{1-i}$, who maintains a ``knowledge'' queue $K_{1-i}$ of all bits sent from $\exists_i$ but not yet validated by $D'$. We note that $X_i \setminus K_{1-i}$ is thus the set of yet-to-be-transmitted private bits known only to $\exists_i$.

\subsection{Build-up Phase}

$D'$ begins the build-up phase after $N$ rounds, once $P$ has started to repeat its policy states.
This phase lasts for exactly $r^2$ rounds, or $r$ periods of $P$'s cycle. 
$D'$ starts with empty $X_0$ and $X_1$, and every round $D'$ simply enqueues $b_0$ and $b_1$ into the appropriate queues. 

During these rounds, $\forall$ can send one bit per round to $\exists_i$, who can transmit those bits to $\exists_{1-i}$, for each $i \in \{0,1\}$.
Because the rate of transmission can vary above or below one bit per round, there is some maximum amount $x'_i \leq x_i$ out of $r$ bits that can reach $\exists_{1-i}$ in the first $r$ rounds.
Each subsequent $r$ rounds, $x_i$ out of the $r$ new bits can be sent (by the rate-limitedness of $P$), thus after $r^2$ rounds at most
$x_i' + (r-1) x_i
\leq r x_i
< r(r-1)$ bits in $X_i$ can be sent to $\exists_{1-i}$ and thus at least $r$ are not known.
By this argument, at the end of the build-up phase we can say that an honest player's knowledge queue has size
$|K_i|
= x_{1-i}' + (r-1) x_{1-i}
\in [r-1,r(r-1)]$,
since we assume $x_{1-i} \geq 1$.

\subsection{Clogging Phase}

In the clogging phase, $D'$ simulates playing TDG on $D$ while clogging the transmissions between $\exists$~players at a steady rate to keep $|X_i|$ and $|K_i|$ constant on period boundaries.
In the last round of every period of $r$ rounds, $D'$ alternates between (1) having $\forall,\exists_0,\exists_1$ play one round of TDG, and (2) forcing $\forall$ to tell the $\exists$~team if they have won in the TDG yet and therefore if $D'$ is going to start the next phase: the tear-down phase.

In the first $r-1$ rounds of each period in this phase, $D'$ clogs the transmissions between $\exists$~players by requiring that bits given to $\exists_i$ by $\forall$ (placed into queue $X_i$) are sent to $\exists_{1-i}$.
This is done by dequeuing the oldest bit $b$ from $X_i$ and checking for $\exists_{1-i}$ to submit $m_{1-i} = b$, otherwise they will lose the game.
Specifically, to preserve the size of $X_i$ and keep up with the rates at which the $\exists$~players can transmit information to each other, $D'$ will do $\Call{enq}{X_i,b_i}$ then validate $\Call{deq}{X_i}=m_{1-i}$ for the first $x_i$ rounds of each period.

Across the whole period, $K_i$ will gain $x_{1-i}$ new bits transmitted from $\exists_{1-i}$ (by the rate-limitedness of $P$). New available bits always exist because the number of private bits available to be sent is $|X_{1-i} \setminus K_i| \geq r > x_{1-i}$ at the start of the period.
Additionally, across the first $x_{1-i}$ rounds of the period, $K_i$ will lose $x_{1-i}$ bits submitted by $\exists_i$ to $D'$, which are always known because $|K_i| \geq (r-1)x_{1-i} \geq x_{1-i}$ at the start.
Overall, this means $|K_i|$ is preserved on period boundaries and honest players will always be able to submit the correct bit and pass the validation.

Labeling the first clogging period with index $0$, at the end of every odd-indexed period, $D'$ will simulate TDG by forwarding the inputs of all players directly to $D$.
However, at the end of even-indexed periods, $D'$ will ignore $\exists_0,\exists_1$ and expect both $\forall$ bits to state whether or not the $\exists$ team has won in TDG, specifically requiring that $b_0=b_1=[q \in F_{\exists}]$.
If this validation fails, then $D'$ will halt with a $\exists$~team victory, so the $\forall$~player must give the correct information to both $\exists$~players to avoid losing, which it is always able to do.

Assuming validation never fails, which is achieved by the honest strategy, the clogging phase continues until the simulated Team DFA Game ends.
If the $\exists$~players lose in the simulation, they lose immediately, otherwise after the even-indexed period when the $\exists$~players learn they have won, $D'$ moves onto the tear-down phase to perform the final validation checks.\looseness=-1

\subsection{Tear-down Phase}

The tear-down phase starts at the beginning of a period, so by the previous arguments for queue size preservation, it starts with $|X_i| = r^2$ and $|K_{1-i}| = x_i' + (r-1) x_i$.
In order to ensure the $\exists$ team's transmissions have been completely clogged all the way until the simulated victory, $D'$ must validate that the remaining bits in $K_i$ have actually been sent by this point.

This phase is split into two parts, with a boosting sub-phase to adjust the size of $X_i$ and $K_{1-i}$ for the following draining sub-phase that empties them. Once each queue has been drained and all validation checks have been passed, then $D'$ will halt with an $\exists$ team victory.
We will need the following fact:

\begin{lemma}
\label{lem:one-sided-tear-down-upper-bound}
	There exists a $k_{end}$ such that, in every round up to the $k_{end}^{\text{th}}$ round of a period, the cumulative number of bits $\exists_0$ will transmit to $\exists_1$ before $\exists_1$ submits a bit to $D'$ in the $k_{end}^{\text{th}}$ round is always upper-bounded by the cumulative number of bits $\exists_1$ will submit to $D'$ in that time (from round $N$ onwards).
\end{lemma}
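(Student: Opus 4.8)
The plan is to recast this lemma as an instance of the rotation-bounding machinery from Lemma~\ref{lem:partials-leq-linear-even} and Corollary~\ref{lem:partials-leq-linear-odd}, applied to the one-directional flow from $\exists_0$ to $\exists_1$. First I would fix attention on the period of $r$ rounds that $P$ repeats after round $N$, and within each round split the timeline into the events that matter: the mid-round exchange (carrying $P_\textsc{mid}(p_k)[0]$ bits from $\exists_0$ to $\exists_1$), the moment $\exists_1$ submits $m_1$ to $D'$, and the end-round exchange (carrying $P_\textsc{end}(p_k)[0]$ bits). Reading the round in the order dictated by Figure~\ref{fig:tdgc_dag}, the bits $\exists_1$ submits to $D'$ in round $k$ are preceded (in causal order) by that round's mid-round transmission but not its end-round transmission, whereas a later round's submission is preceded by both. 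So I would form the sequence of length $2r$ that interleaves, for each of the $r$ rounds in the period, the quantity $a^{\mathrm{mid}}_k$ counting $\exists_0\to\exists_1$ transmission bits available strictly before the $\exists_1$-submission of round $k$ that were not yet counted, and $a^{\mathrm{end}}_k$ counting the remainder — i.e. an alternating sequence (transmission-chunk, submission-slot, transmission-chunk, submission-slot, \dots). The total transmission over the period is $x_0 < r$, so this sequence of $2r$ naturals has sum at most $r-1$, exactly the hypothesis of the \Call{rotate-bounded}{} definition with $n=r$.

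Next I would apply Corollary~\ref{lem:partials-leq-linear-odd} (the odd-index version) to obtain an odd rotation index at which all positive partial sums $B_j$ satisfy $B_j < j/2$. An odd index lands us just after a submission slot and just before a transmission chunk; unwinding the alternating structure, the partial sum up to the point just before $\exists_1$'s submission in the round a distance $t$ into the rotated period counts exactly "cumulative $\exists_0\to\exists_1$ transmission bits before that submission", while $j/2$ at that point equals "cumulative $\exists_1$-submission slots passed", i.e. the cumulative number of bits $\exists_1$ submits to $D'$ up to and including that round. The inequality $B_j < j/2$, specialized to the round we pick out by the rotation, is precisely the claimed bound; the round index at which the rotated period begins, shifted back by $N$ and reduced to a position within the period, gives the desired $k_{end}$. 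I would state that $k_{end}$ is this position and that the bound "in every round up to the $k_{end}^{\text{th}}$" follows because the partial-sum inequality holds for \emph{all} $j>0$, not just the terminal one, so every prefix of the rotated period — equivalently every round up to $k_{end}$ — satisfies it.

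The main obstacle, and the step I would spend the most care on, is the bookkeeping that matches the abstract partial-sum inequality $B_j<j/2$ to the concrete causal ordering of transmissions and submissions inside a round. In particular I need to be scrupulous about whether round $k$'s own mid-round and end-round exchanges count toward "bits $\exists_0$ will transmit before $\exists_1$ submits in the $k_{end}^{\text{th}}$ round" when $k=k_{end}$ versus $k<k_{end}$, and I must confirm that the "$+1$ per round" accounting on the $\exists_1$-submission side (exactly one submitted bit $m_1$ per round) is what produces the $j/2$ term after interleaving with the transmission chunks — this is why the factor is $1/2$ and why the sequence must have length $2r$ rather than $r$. A secondary point is the choice between Lemma~\ref{lem:partials-leq-linear-even} and Corollary~\ref{lem:partials-leq-linear-odd}: the parity of the index determines whether the chosen period-start falls on a transmission boundary or a submission boundary, and I want the one that places a submission slot at the very end of the window so the inequality is non-strict-free at the boundary; I would verify that the odd version is the correct choice, or otherwise swap to the even one and adjust the offset by one round.
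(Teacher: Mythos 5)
Your overall strategy matches the paper's: decompose each round of the period into two half-rounds (mid-exchange, then $\exists_1$'s submission, then end-exchange), obtain a length-$2r$ sequence of $\exists_0\!\to\!\exists_1$ transmission counts with period sum $x_0 \leq r-1$, and invoke the odd-index case (Corollary~\ref{lem:partials-leq-linear-odd}) of the rotation-bounding machinery so that the distinguished boundary falls at a submission slot. The bookkeeping you flag as the main obstacle (which half-round's exchanges precede the submission, and why the comparison term is $j/2$) is indeed handled exactly as you describe.

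However, there is a genuine directional gap. The lemma fixes the \emph{endpoint} of the window ($\exists_1$'s submission in the $k_{end}^{\text{th}}$ round) and quantifies over all earlier starting rounds: every window \emph{ending} at that submission must have transmissions bounded by submissions. Your construction applies \Call{rotate-bounded}{} to the forward-in-time sequence, which yields a distinguished \emph{starting} point such that every window extending \emph{forward} from it satisfies $B^{(i)}_j < j/2$; you then identify $k_{end}$ with "the round index at which the rotated period begins" and assert that "every prefix of the rotated period" is "equivalently every round up to $k_{end}$." That equivalence is false: a prefix of the rotated period is a window anchored at its left end, whereas the lemma's windows are anchored at their right end. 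The paper resolves this by applying Corollary~\ref{lem:partials-leq-linear-odd} to the \emph{time-reversed} sequence $(a_{2r-1},\ldots,a_0)$, so that prefixes of the rotated reversed sequence correspond to backward windows terminating at a fixed half-round; the odd rotation index $i$ of the reversed sequence then places that terminal half-round at the mid-round position of round offset $k_{end} = \frac{r-i-1}{2}$, i.e.\ exactly at $\exists_1$'s submission. Your argument becomes correct once you insert this reversal and recompute $k_{end}$ accordingly; without it, you have proved a statement with the quantifiers anchored at the wrong end of the window.
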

\begin{proof}
	Say the period begins in round $m \geq N$, and recall that we can assume the period length $r$ is even.
	Consider the sequence $a_k$ of the number of bits transmitted from $\exists_0$ to $\exists_1$ in the $k$ half-rounds starting in round $m$, so $a_k = P_\textsc{mid}(p_{m + k/2})[0]$ when $k$ is even and $a_k = P_\textsc{end}(p_{m + (k-1)/2})[0]$ when $k$ is odd.
	Since policy states repeat, $\forall k \geq 0.\ a_{k+r} = a_k$, and $a_0+\ldots+a_{r-1} = x_0 < r$,
	so we can apply Corollary~\ref{lem:partials-leq-linear-odd} to the reversed sequence
    $(a_{r-1},\ldots,a_0)$
    to get an odd index $i$ such that
    $\forall j>0.\ B^{(i)}_j < \frac{j}{2}$.

    Since $B^{(i)}_j$ is the cumulative number of bits transmitted from $\exists_0$ to $\exists_1$ across the $j$ half-rounds ending when $\exists_1$ submits a bit to $D'$ in round $m+\frac{r-1-i}{2}$,
	and $\lceil \frac{j+1}{2} \rceil \geq \frac{j}{2}$ is the cumulative number of bits $\exists_1$ submits to $D'$ across the same set of $j$ half-rounds, then
	round offset $k_{end} = \frac{r-i-1}{2}$ satisfies Lemma~\ref{lem:one-sided-tear-down-upper-bound}.
\end{proof}
    
\paragraph{Draining Sub-Phase}

    Given $k_{end}$ from Lemma~\ref{lem:one-sided-tear-down-upper-bound} (by symmetry, the lemma applies in both directions), let $t_{end} \leq k_{end}$ be the total number of bits transmitted from the beginning of a period until the bit submission in the $k_{end}^{\text{th}}$ round.
    If a period starts with $|X_i| \leq k_{end}$ and $|X_i \setminus K_{1-i}| = t_{end}$, then we can have $D'$ validate bits in the $|X_i|$ rounds before the $k_{end}^{th}$ round and reach $|X_i|=|K_{1-i}|=0$ where each of the $t_{end}$ transmitted bits are clogging bits from $X_i \setminus K_{1-i}$ with no room for extra communication from $\exists_i$ to $\exists_{1-i}$.
    
    In order to ensure some period starts with $|X_i| \leq k_{end}$ and $|X_i \setminus K_{1-i}| = t_{end}$ we use some $n_i$ periods beforehand to drain each queue appropriately. Since in each period there are $x_i$ transmission bits (fixed) and up to $r$ validated bits (based on $D'$),
    it suffices to have $|X_i| \leq n_i r + k_{end}$
    and $|X_i \setminus K_{1-i}| = n_i x_i + t_{end}$.

\paragraph{Boosting Sub-Phase}
    
    The tear-down phase must start with 
    $|X_i \setminus K_{1-i}| = r^2 - (x_i' + (r-1) x_i) \geq r-1$,
    but this may not be $n_i x_i + t_{end}$ for any $n_i$,
    so before $n_i + 1$ draining periods, we will have additional periods to increase the number of private bits by
    $\delta_i = (n_i x_i + t_{end}) - (r^2 - (x_i' + (r-1) x_i))$.
    So for $\delta_i \geq 0$, we can choose any sufficiently-large $n_i$.

    After one period where $\forall$ gives $c_i$ new clogging bits to $\exists_i$ and $D'$ validates $v_{1-i}$ bits from $\exists_{1-i}$,
    we would have $\Delta|X_i| = c_i - v_{1-i}$
    and $\Delta|K_{1-i}| = x_i - v_{1-i}$ (given that $\exists_i$ initially has $|X_i \setminus K_{1-i}| \geq x_i$ private bits to transmit to $\exists_{1-i}$),
    thus $\Delta|X_i \setminus K_{1-i}|
    = c_i - x_i$.
    Therefore, if we set $c_i = x_i + 1 \leq r$ and $v_{1-i} = x_i$,
    then we get $\Delta|X_i| = +1$,
    $\Delta|K_{1-i}| = 0$,
    and $\Delta|X_i \setminus K_{1-i}| = +1$.
    If $\delta_i < \delta_{1-i}$, then to delay we also need ``filler'' rounds with no change to the sizes of any queues, which can be achieved by setting $c_i = v_{1-i} = x_i$.
    
    To ensure $\delta_i$ is positive and $|X_i| \leq n_i r + k_{end}$ at the \emph{end} of this sub-phase, we need to choose an $n_i$ that satisfies the following constraints at the \emph{start} of the tear-down phase:
    \begin{equation*}
    \begin{aligned}
        0 &\leq \delta_i = (n_i x_i + t_{end}) - |X_i \setminus K_{1-i}| \\
	    n_i &\geq \left( |X_i \setminus K_{1-i}| - t_{end}\right)/x_i 
    \end{aligned}
    \end{equation*}
    and
    \begin{equation*}
    \begin{aligned}
	    n_i r + k_{end} &\geq r^2 + \delta_i \\
	    n_i r + k_{end} &\geq |X_i| + (n_i x_i + t_{end}) - |X_i \setminus K_{1-i}| = |K_{1-i}| + (n_i x_i + t_{end}) \\
	    n_i &\geq \left(|K_{1-i}| + t_{end} - k_{end}\right)/\left(r - x_i\right) 
    \end{aligned}
    \end{equation*}
    
    We pick $n_i$ to be the smallest natural number satisfying both lower bounds:
    \begin{equation*}
    \begin{aligned}
	    n_i &= \left\lceil \max\left\{
	        \frac{|X_i \setminus K_{1-i}| - t_{end}}{x_i},
	        \frac{|K_{1-i}| + t_{end} - k_{end}}{r - x_i}
        \right\} \right\rceil\\
        &= \left\lceil \max\left\{
	        \frac{r^2 - (x_i' + (r-1) x_i) - t_{end}}{x_i} ,
	        \frac{(x_i' + (r-1) x_i) + t_{end} - k_{end}}{r - x_i}
        \right\} \right\rceil\\
    \end{aligned}
    \end{equation*}
    
    Since $0 \leq x_i' \leq x_i < r$ and $0 \leq t_{end} \leq k_{end} < r$, we can upper bound $n_i=O(r^2)$.

\paragraph{Putting it all together}
    
    At the beginning of the tear-down period, $D'$ will run a set of $\delta_i$ periods where $\forall$ produces $x_i + 1$ new bits and $D'$ validates $x_i$ bits, followed by $\max\{\delta_0,\delta_1\} - \delta_i$ periods of $x_i$ new and validated bits.
    After $\delta_{\max} = \max\{\delta_0,\delta_1\}$ rounds, we will have
    $|X_i| = r^2 + \delta_i$
    and $|X_i \setminus K_{1-i}| = n_i x_i + t_{end}$,
    preserving $|K_{1-i}| = x_i' + (r-1) x_i$.
    $D'$ will then run $n_i$ periods plus $k_{end}$ rounds ignoring $\forall$ and validating the remainder of $X_i$ (starting $|X_i|$ rounds before the end).

\subsection{Proof of Undecidability}

\undecidable*

\begin{proof}
We reduce from the Team DFA Game.
For any $(r,x_0,x_1,N)$-rate-limited policy $P$ where $x_0,x_1 < r$,
given an input DFA $D$ for playing the TDG,
we construct the DFA $D'$ described in Algorithm~\ref{alg:tdgc-d-prime-update} 
for playing the TDGC under policy $P$.
Since determining whether or not the $\exists$ team has a forced win in the TDG is undecidable, this reduction will show that the same question of the TDGC under policy $P$ is undecidable as well.

Given the analysis of $D'$ from the previous sections, we first note that $D'$ is indeed a finite automaton: the waiting counter takes on $N$ values; each queue $X_i$ has maximum size $r^2 + \delta_i$ bits, where $n_i = O(r^2)$ so $\delta_i = O(r^3)$; the state $q$ of $D$ has $|Q|$ possible values; and the various other counters require $O(\log r)$ bits each.
From beginning to end, the maximum memory requirement is $O\left(\max\left\{\log N, r^2 + \log|Q|, r^3\right\}\right)$ bits, summarizing Table~\ref{tbl:tdgc-d-prime-memory}.

\begin{table}[htp]
\label{tbl:tdgc-d-prime-memory}
\centering
\begin{tabular}{|l|l|}
\hline
	\textbf{State Category} & \textbf{Space Needed (bits)} \\
\hline
	\Call{halt}{$winner$} & $\Theta(1)$ \\
\hline
	\Call{waiting}{$w$} & $\Theta(\log N)$ \\
\hline
	\Call{build-up}{$X_0$, $X_1$} & $\Theta(r^2)$ \\
\hline
	\Call{clog}{$X_0$, $X_1$, $q$, $p$, $k$, $c_{01}$, $c_{10}$} & $\Theta(r^2 + \log|Q|)$ \\
\hline
	\Call{boost}{$X_0$, $X_1$, $d_{01}$, $d_{10}$, $k$, $c_{01}$, $c_{10}$} & $\Theta(r^2 + \delta_{\max})$ \\
\hline
	\Call{drain}{$X_0$, $X_1$, $c_{01}$, $c_{10}$} & $\Theta(r^2 + \delta_{\max})$ \\
\hline
\end{tabular}
\caption{Memory Requirements of $D'$ over the course of the TDGC.}
\end{table}

If there is a winning strategy $S$ for the $\exists$ team on $D$ in the TDG, then the corresponding honest strategy described above that plays the simulated TDG using $S$ will be a winning strategy for the $\exists$ team on $D'$ in the TDGC under policy $P$.

If there is a winning strategy for the $\exists$ team on $D'$ in the TDGC under policy $P$, then consider any winning execution $\gamma$. Since winning requires termination, let $C$ be the number of periods in the clogging phase.

If $\gamma$ reaches \Call{halt}{$\exists$} in the clogging phase because $\forall$ did not correctly tell the $\exists$ team whether or not $q \in F_\exists$, then $\forall$ did not play optimally.
Since $\forall$ has perfect information and is allowed to give either $0$ or $1$ by the game rules, there is an alternate execution $\gamma'$ where $\forall$ gives the correct answer instead and the game continues, so no $\exists$ team strategy can force a win in this way.

The only other way for the $\exists$ team to win is for $\gamma$ to reach \Call{halt}{$\exists$} at the end of the draining phase, which means they must pass all of the validation checks by $D'$.

\begin{table}[htp]
\centering
\begin{tabular}{|l|l|l|l|}
\hline
	Phase & \Call{enq}{$X_i$} Count & \Call{deq}{$X_i$} Count & Information $\exists_i \to \exists_{1-i}$ \\
\hline
	Build-up & $r^2$ & $0$ & $x_i' + (r-1) \times x_i$ \\
\hline
	Clogging & $C \times x_i$ & $C \times x_i$ & $C \times x_i$\\
\hline
	Boosting & $\delta_{\max} \times x_i + \delta_i$ & $\delta_{\max} \times x_i$ & $\delta_{\max} \times x_i$\\
\hline
    Draining & $0$ & $r^2 + \delta_i$ & $n_i \times x_i + t_{end_i}$ \\
\hline
\end{tabular}
\caption{Accounting of \Call{enq}{$X_i$}, \Call{deq}{$X_i$}, and Information Transfer between players in each phase}
\label{tbl:d-prime-accounting}
\end{table}

Table~\ref{tbl:d-prime-accounting} details the value of three quantities in each phase of the game: the number of bits enqueued into $X_i$, the number of bits dequeued from $X_i$, and the amount of meaningful bits of information that can be transmitted from $\exists_i$ to $\exists_{1-i}$.
By the definition of $\delta_i$ and some algebra, it can be seen that each column has the same sum;
let $I$ be this total quantity of bits.

Because $D'$ validates the value of each dequeued bit, in order for $\exists_i$ to guarantee they pass all validation checks, they must send $I$ bits of information to $\exists_{1-i}$.
However, because $I$ is the maximum amount of information $\exists_i$ can send to $\exists_{1-i}$, no further information can be sent,
which means that in every round in which $D'$ simulates the TDG on $D'$, $\exists_i$ has the same amount of information about the state $q$ of $D$ as it would when actually playing TDG on $D$.
Therefore, if the $\exists$ team has a winning strategy for playing TDGC on $D'$ under policy $P$, within it is a winning sub-strategy for them to play the TDG on $D$.
\end{proof}

\begin{algorithm}
\caption{Pseudocode for the $D'$ internal update function per round}
\label{alg:tdgc-d-prime-update}
\begin{algorithmic}[1]
\small
\State $q' \gets \Call{waiting}{1}$ \Comment{Initial state}
\Function{dfa-round-update}{$q'$, $b_0$, $b_1$, $m_0$, $m_1$}
    \Switch{$q'$}
        \Case{\Call{halt}{$winner$}} \Comment{Game is over, with $q' \in F'_{winner}$}
            \State \Return \Call{halt}{$winner$}
        \EndCase
        \Case{\Call{waiting}{$w$}} \Comment{Waiting Phase, delaying until policy starts repeating}
            \IfOneLine{$w<N$}{\Return \Call{waiting}{$w+1$}}
            \State \Return \Call{build-up}{$[]$, $[]$}
        \EndCase
        \Case{\Call{build-up}{$X_0$, $X_1$}} \Comment{Build-up Phase, filling up $X_i$ queues}
            \State \Call{enq}{$X_0$, $b_0$}
            \State \Call{enq}{$X_1$, $b_1$}
	        \IfOneLine{$\Call{length}{X_0} < r^2$}{\Return \Call{build-up}{$X_0$, $X_1$}}
	        \State \Return \Call{clog}{$X_0$, $X_1$, $q_0$, $\Call{even}{}$, $r$, $x_0$, $x_1$}
        \EndCase
        \Case{\Call{clog}{$X_0$, $X_1$, $q$, $p$, $k$, $c_{01}$, $c_{10}$} \textbf{given} $k > 1$} \Comment{Clogging Phase, boosting}
            \ForAll{$i \in \{0,1\}$}
                \If{$c_{i,1-i} > 0$}
                    \State \Call{enq}{$X_i$, $b_i$}
                    \IfOneLine{$\Call{deq}{X_i} \neq m_{1-i}$}{\Return $\Call{halt}{\forall}$}
                    \State $c_{i,1-i} \gets c_{i,1-i} - 1$
                \EndIf
            \EndFor
            \State \Return \Call{clog}{$X_0$, $X_1$, $q$, $p$, $k-1$, $c_{01}$, $c_{10}$}
        \EndCase
        \Case{\Call{clog}{$X_0$, $X_1$, $q$, $\Call{odd}{}$, $1$, $0$, $0$}} \Comment{Clogging Phase, simulating $D$}
            \State $q \gets \delta(\delta(\delta(\delta(q, b_0), b_1), m_0), m_1)$
            \State \Return \Call{clog}{$X_0$, $X_1$, $q$, $\Call{even}{}$, $r$, $x_0$, $x_1$}
        \EndCase
        \Case{\Call{clog}{$X_0$, $X_1$, $q$, $\Call{even}{}$, $1$, $0$, $0$}} \Comment{Clogging Phase, testing for $\exists$ win}
            \IfOneLine{$\lnot \left(b_0 = b_1 = [q \in F_{\exists}]\right)$}{\Return $\Call{halt}{\exists}$}
            \IfOneLine{$q \in F_\exists$}{\Return \Call{boost}{$X_0$, $X_1$, $\delta_0$, $\delta_1$, $r$, $x_0$, $x_1$}}
            \IfOneLine{$q \in F_\forall$}{\Return $\Call{halt}{\forall}$}
            \State \Return \Call{clog}{$X_0$, $X_1$, $q$, $\Call{odd}{}$, $r$, $x_0$, $x_1$}
        \EndCase
        \Case{\Call{boost}{$X_0$, $X_1$, $d_{01}$, $d_{10}$, $k$, $c_{01}$, $c_{10}$} \textbf{given} $k>1$} \Comment{Boost Phase, clogging}
            \ForAll{$i \in \{0,1\}$}
                \If{$c_{i,1-i} > 0$}
                    \State \Call{enq}{$X_i$, $b_i$}
                    \IfOneLine{$\Call{deq}{X_i} \neq m_{1-i}$}{\Return $\Call{halt}{\forall}$} \Comment{Return from caller}
                    \State $c_{i,1-i} \gets c_{i,1-i} - 1$
                \EndIf
            \EndFor
            \State \Return \Call{boost}{$X_0$, $X_1$, $d_{01}$, $d_{10}$, $k-1$, $c_{01}$, $c_{10}$}
        \EndCase
        \Case{\Call{boost}{$X_0$, $X_1$, $d_{01}$, $d_{10}$, $1$, $0$, $0$}} \Comment{Boost Phase, new boost bits}
            \ForAll{$i \in \{0,1\}$}
                \If{$d_{i,1-i} > 0$}
                    \State \Call{enq}{$X_i$, $b_i$}
                    \State $d_{i,1-i} \gets d_{i,1-i} - 1$
                \EndIf
            \EndFor
            \IfOneLine{$d_{01} + d_{10} > 0$}{\Return \Call{boost}{$X_0$, $X_1$, $d_{01}$, $d_{10}$, $r$, $x_0$, $x_1$}}
            \State \Return \Call{drain}{$X_0$, $X_1$, $r \times n_0 + k_{end_0}$, $r \times n_1 + k_{end_1}$}
        \EndCase
        \Case{\Call{drain}{$X_0$, $X_1$, $c_{01}$, $c_{10}$} given $c_{01} + c_{10} > 0$} \Comment{Drain Phase, emptying queues}
            \ForAll{$i \in \{0,1\}$}
                \If{$c_{i,1-i1} > 0$}
                    \IfOneLine{$|X_i| = c_{i,1-i} \land \Call{deq}{X_i} \neq m_{1-i}$}{\Return $\Call{halt}{\forall}$}
                    \State $c_{i,1-i1} \gets c_{i,1-i1} - 1$
                \EndIf
            \EndFor
            \State \Return \Call{drain}{$X_0$, $X_1$, $c_{01}$, $c_{10}$}
        \EndCase
        \Case{\Call{drain}{$[]$, $[]$, $0$, $0$}} \Comment{Drain Phase, finished!}
            \State \Return \Call{halt}{$\exists$}
        \EndCase
    \EndSwitch
\EndFunction

\end{algorithmic}
\end{algorithm}

\section{Decidability}
\label{sec: Decidability}

We show that our general construction from the previous section is tight with respect to the transmission rate between $\exists$ players. 

For our precise bounds, we assume the straightforward encoding of the input DFA $D$ with $n$ states as a table for $\delta$ containing $2n$ states, a state $q_0$, and the states in $F_{\exists}$ and $F_{\forall}$, thus the input size is $\Theta(|Q|)$.

First, we demonstrate $(r,r,r,0)$-rate-limited policies under which the Team DFA Game with Communication is not only decidable but in PSPACE. Later we will show more restrictive communication patterns are in EXPSPACE. Recall $(r,r,r,0)$-rate-limited policies are the case where both players are allowed to exchange $r$ bits over the course of a period of length $r$.

\begin{theorem}
\label{thm:tdgc-1-bit-every-mid-round-decidable}
TDGC is decidable in PSPACE with a 1-bit, mid-round exchange in both directions every round: policies $P$ with
$P_\textsc{mid}(p) = (1,1)$
and
$P_\textsc{end}(p) = (0,0)$ for all $p \in \Pi$.
\end{theorem}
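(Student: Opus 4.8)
The plan is to show that when both $\exists$ players can exchange one bit every mid-round, they effectively have perfect knowledge of each other's past moves (with a one-round lag that doesn't matter), so the game collapses to a finite-state perfect-information team game, which is decidable in PSPACE by a standard alternating-reachability / fixpoint argument on an exponentially-sized (but polynomially-described) game graph. First I would make precise the \emph{honest protocol}: in round $k$, before submitting $m_i^{(k)}$, player $\exists_i$ uses the mid-round exchange to send the bit $m_i^{(k-1)}$ that it submitted in the previous round (on the first round it sends a dummy bit). Since $\forall$ already learns both $m_0$ and $m_1$ at the end of each round, and now each $\exists_i$ learns the other's previous move, after this exchange the two $\exists$ players share common knowledge of the entire move history through round $k-1$, together with the DFA state $q$ that it induces, plus their own just-learned bit $b_i^{(k)}$ from $\forall$.

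Next I would argue that the honest protocol is without loss of generality: the one bit per round is exactly enough to transmit one move per round, and a move is all $\exists_i$ produces per round, so no strategy can do better than conveying its full move sequence — there is no ``spare bandwidth'' for anything cleverer, and conveying less can only lose information. (This is the mirror image of the clogging arguments in Section~\ref{sec:basic}, run in the helpful direction rather than the adversarial one.) Hence I may assume both $\exists$ players operate on the shared information set $(q, \text{history})$, which is summarized by the current DFA state $q \in Q$ since $D$ is deterministic; the only asymmetry left within a round is that $\exists_0$ sees $b_0$ and $\exists_1$ sees $b_1$ before committing. I would model a single round as a small game tree: $\forall$ picks $(b_0,b_1)$ from state $q$, then $\exists_0$ picks $m_0$ as a function of $(q,b_0)$, then $\exists_1$ picks $m_1$ as a function of $(q,b_1)$ (it does \emph{not} yet know $m_0$, but it will learn it next round, which is irrelevant for choosing $m_1$ since the state update is just $\delta$ applied in sequence). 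Because both $\exists$ choices in a round depend only on the common state $q$ plus that player's private $b_i$, and the two private bits are independent, the round is equivalent to: $\forall$ picks $(b_0,b_1)$, then the $\exists$ team picks a pair of functions $f_0 : \{b_0\} \mapsto m_0$, $f_1 : \{b_1\} \mapsto m_1$ — equivalently the team picks, for this $q$, a strategy profile over the $4$ possible $(b_0,b_1)$; this is a finite local game.

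The decidability-and-PSPACE claim then follows by computing the set $W \subseteq Q$ of states from which the $\exists$ team has a forced win, as the least fixpoint of the obvious monotone operator: $q \in W$ if $q \in F_\exists$, and otherwise $q \notin F_\forall$ and there exist $\exists$-response functions $f_0, f_1$ such that for every $(b_0,b_1) \in \{0,1\}^2$ the resulting state $\delta(\delta(\delta(\delta(q,b_0),b_1),f_0(b_0)),f_1(b_1))$ lies in $W$. Each iteration touches $O(|Q|)$ states and at each state enumerates $O(1)$ candidate $(f_0,f_1)$ pairs and $4$ adversary choices, and the fixpoint stabilizes within $|Q|$ iterations, so the computation runs in polynomial time, hence certainly in PSPACE; membership of $q_0$ in $W$ decides the game. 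I would remark that the PSPACE bound here is generous (the problem is in fact in P under the stated encoding), stated as PSPACE to foreshadow the EXPSPACE bounds for the more restrictive patterns treated afterward. The main obstacle is the second step — rigorously justifying that the honest ``echo your last move'' protocol dominates every other use of the channel — since one must rule out schemes where a player withholds its true move bit and instead transmits some function of future-relevant private information; the clean way to close this is to observe that $\forall$ eventually reveals every $m_i$ anyway, so the \emph{only} information $\exists_i$ ever holds that $\exists_{1-i}$ lacks is the single bit $m_i$ of the current round, one bit per round, exactly matching the channel capacity, with equality forced because any deficit is strictly recoverable into a disadvantage by an adaptive $\forall$.
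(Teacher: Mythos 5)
There is a genuine gap: your honest protocol transmits the wrong bit. You have $\exists_i$ spend its one bit per round echoing its own previous move $m_i^{(k-1)}$, and conclude that the players thereby obtain common knowledge of ``the entire move history through round $k-1$, together with the DFA state $q$ that it induces.'' But the state of $D$ is a function of the full input word $b_0 b_1 m_0 m_1 b_0 b_1 \cdots$, and under your protocol $\exists_0$ never learns any $b_1^{(j)}$ (nor $\exists_1$ any $b_0^{(j)}$): each $b_i$ is told only to $\exists_i$ and to $\forall$, and the fact that $\forall$ learns the moves $m_i$ does not make anything public to the teammates. So neither player can reconstruct $q$, and the common-knowledge claim on which everything downstream rests fails. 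The correct use of the channel --- and the one the paper uses --- is to forward the freshly received adversary bit $b_i$ during the mid-round exchange, which by the timing of $P_\textsc{mid}$ arrives before $(m_0,m_1)$ must be chosen. Then both players know $(q,b_0,b_1)$ when they move; with a common deterministic strategy $s(q,b_0,b_1)\mapsto(m_0,m_1)$ each also knows the other's move, hence the next state, and the game genuinely becomes perfect-information. Note that $m_i$ is derived information (determined by the shared knowledge and $b_i$ under such a strategy), so echoing it wastes the channel, whereas $b_i$ is the only exogenous private bit; your closing claim that ``the only information $\exists_i$ holds that $\exists_{1-i}$ lacks is $m_i$'' has this exactly backwards.

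The error propagates into your fixpoint. You model a round as $\exists_0$ choosing $m_0=f_0(b_0)$ and $\exists_1$ choosing $m_1=f_1(b_1)$ from a commonly known $q$, i.e., neither player learns the other's $b$-bit before moving. That computes the winning region of a strictly less-informed game than TDGC under this policy, and the two can disagree: the clogging DFA of Theorem~\ref{thm:mid-1-per-r}, which demands $m_0=b_1$ and $m_1=b_0$, is winnable precisely because the players can swap $b_0,b_1$ mid-round, yet your local game declares it lost. The correct local quantification is: for all $(b_0,b_1)$ there exists $(m_0,m_1)$ with $\delta(\delta(\delta(\delta(q,b_0),b_1),m_0),m_1)$ surviving (avoiding $F_\forall$ and reachable cycles). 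With that fix, your fixpoint idea is actually a nicer algorithm than the paper's --- the paper brute-forces all $4^{4|Q|}$ memoryless strategies $s(q,b_0,b_1)$ and checks each by depth-first search in $\Theta(|Q|)$ space, whereas the corrected AND-OR reachability runs in polynomial time --- but as written both the protocol and the recurrence answer the wrong game.
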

\begin{proof}
    Under such a policy, TDGC becomes a perfect information game.
    In each round of the game, the optimal play for $\exists_i$ is to send $b_i$ to $\exists_{1-i}$ immediately after receiving it, meaning $\exists_{1-i}$ will know both $b_0$ and $b_1$ before it chooses $m_{1-i}$.
    Since the $\exists$~team knows the initial state $q_0$ of $D$,
    we can consider strategy functions $s : (q,b_0,b_1) \mapsto (m_0,m_1)$, which both players can use to decide their own next move and know what move their teammate will perform as well, letting them use $\delta$ to learn the state $q$ of $D$ in the next round and beyond.
    
    Note that it suffices for the $\exists$~team to have a memoryless strategy because the policy $P$ is constant per round, DFA transitions do not depend on the history of the game, and the adversarial $\forall$ player's choices are not bound by the history either.
    It also suffices to have a deterministic strategy: if there exists a non-deterministic winning strategy $s'$, then we can fix $s(q,b_0,b_0)$ to be some $(m_0,m_1)$ with $\Pr[s'(q,b_0,b_1) = (m_0,m_1)] > 0$ because all game executions in which the $\exists$~team plays with deterministic strategy $s$ are possible executions when playing with strategy $s'$, thus must also be winning.
    
    We show that deciding whether or not the $\exists$~team has a forced win in TDGC under policy $P$ is in PSPACE by giving a brute-force search algorithm.
    For every strategy $s$ among the $4^{4|Q|}$ possible strategy functions, 
    \iftrue
    we construct a game graph $G_s$ where each state $q \in Q \setminus F_{\exists}$ is a vertex and for all $b_0,b_1 \in \{0,1\}$, $q$ has an edge to $q' = \delta(\delta(\delta(\delta(q, b_0), b_1), m_0), m_1)$ where $(m_0,m_1) = s(q,b_0,b_1)$ as long as $q' \notin F_{\exists}$.
    This means $s$ is a winning strategy if and only if all $q \in F_{\forall}$ and all cycles are not reachable from $q_0$ in $G_s$, since otherwise the traversal corresponds to a losing execution or the start of a potentially non-terminating execution of the game that the $\forall$~player can force to occur.
    We can thus perform an exhaustive depth-first search from $q_0$ for a counterexample (of length at most $|Q|$) to decide whether or not $s$ is a winning strategy.
    Since we only need $\Theta(|Q|)$ space to store the current $s$, $G_s$, and depth-first search stack, this algorithm runs in PSPACE.
    \else
    we can test $s$ against all counter-strategies for the $\forall$~player -- consisting of the bits $(b_0(q),b_1(q))$ that $\forall$ should play when in state $q \in Q$ -- for one which does not lead to a win when the $\exists$~team plays according to $s$.
    The test is a simulation of the TDGC under $P$, where in DFA state $q$ the $\forall$~player submits $(b_0(q),b_1(q))$ then the $\exists$~team submits $s(q,b_0(q),b_1(q)) = (m_0(q),m_1(q))$.
    This simulation only needs to run for up to $|Q|$ rounds because if the game repeats a state before either team wins, then the game will not terminate and thus does not lead to a win for the $\exists$~team.
    Since we only need $\Theta(|Q|)$ space to store $s$, $b_0(q)$, $b_1(q)$, and the simulation's game state $q \in |Q|$, this algorithm is in PSPACE.
    \fi
\end{proof}

Since it is sufficient to send only one bit of useful information mid-round, we can extend Theorem~\ref{thm:tdgc-1-bit-every-mid-round-decidable} to higher transmission rates.

\begin{corollary}
	TDGC is decidable in PSPACE with at least a 1-bit, mid-round exchange in both directions every round: policies $P$ with
$P_\textsc{mid}(p)[i] \geq 1$ for all $p \in \Pi$ and each $i \in \{0,1\}$.
\end{corollary}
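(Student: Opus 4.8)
The plan is to reduce from Theorem~\ref{thm:tdgc-1-bit-every-mid-round-decidable} by showing that any policy $P$ with $P_\textsc{mid}(p)[i] \geq 1$ for every $p$ and each $i \in \{0,1\}$ is, from the $\exists$ team's perspective, at least as powerful as the all-$(1,1)$ mid-round policy, and hence also grants perfect information. First I would observe that the $\exists$ team can restrict attention to a \emph{sub-policy} in which, each round, $\exists_i$ uses exactly one of its (at least one) mid-round bits to transmit $b_i$ to $\exists_{1-i}$, and simply fills every other available transmission bit (both the remaining mid-round bits and all end-round bits) with a fixed dummy value such as $0$. Since having more communication bandwidth never hurts a cooperating team --- any strategy under the $(1,1)$-mid policy lifts to a strategy under $P$ by padding with dummy bits --- this shows that whenever the $\exists$ team wins under the $(1,1)$-mid policy on a given $D$, it also wins under $P$ on the same $D$.

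Next I would argue the converse, that extra bandwidth does not help the $\exists$ team win any additional instances. The cleanest way is to reuse the structural argument from the proof of Theorem~\ref{thm:tdgc-1-bit-every-mid-round-decidable}: under $P$, after $\exists_i$ transmits $b_i$ mid-round, both players know $(q, b_0, b_1)$ before choosing $(m_0, m_1)$, exactly as in the perfect-information case, so it again suffices to consider memoryless deterministic strategies $s : (q, b_0, b_1) \mapsto (m_0, m_1)$. The justification is the same as before: the policy state and its bandwidth do not depend on game history, $\delta$ does not depend on history, and $\forall$ is unconstrained by history, so neither additional transmission bits nor memory of past rounds can be leveraged by the $\exists$ team beyond knowing the current $(q,b_0,b_1)$. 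Concretely, any $\forall$ counter-strategy that beats the memoryless deterministic $s$ in the $(1,1)$-mid game also beats it in the $P$-game (the reachable game graph $G_s$ from $q_0$ is identical), so the $\exists$ team wins under $P$ iff it wins under the $(1,1)$-mid policy.

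Combining the two directions, the decision problem under $P$ coincides with the decision problem under the corresponding $(1,1)$-mid policy on the same DFA $D$, which Theorem~\ref{thm:tdgc-1-bit-every-mid-round-decidable} places in PSPACE; in fact the very same brute-force search over the $4^{4|Q|}$ memoryless deterministic strategies $s$, building $G_s$ and doing a depth-first search from $q_0$ for a reachable $F_\forall$ state or cycle, decides it in $\Theta(|Q|)$ space. I would also remark that $P$ being a unary-alphabet DFA (and hence eventually periodic, as in Figure~\ref{fig:policy-dfa}) is irrelevant here since the argument only uses the per-round lower bound $P_\textsc{mid}(p)[i]\geq 1$ uniformly over states $p$.

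The main obstacle --- really the only subtlety --- is making the converse direction airtight: one must be careful that the \emph{full} extra bandwidth of $P$ (including end-round bits, and mid-round bits beyond the first) genuinely cannot encode strategically useful information. The right framing is that the $\exists$ team already has \emph{perfect information} about $D$'s state once $b_i$ is forwarded, so there is simply nothing further to communicate that could affect optimal play; any additional bits are redundant. This is exactly the observation flagged in the sentence preceding the corollary (``it is sufficient to send only one bit of useful information mid-round''), so the corollary's proof can be kept to a few lines citing Theorem~\ref{thm:tdgc-1-bit-every-mid-round-decidable} and this padding/redundancy argument.
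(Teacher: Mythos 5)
Your proposal is correct and matches the paper's (implicit) argument: the corollary is justified by exactly the observation you make, that $\exists_i$ can spend one mid-round bit forwarding $b_i$, which already yields perfect information, after which any surplus mid-round or end-round bits are redundant and the brute-force search over memoryless deterministic strategies from Theorem~\ref{thm:tdgc-1-bit-every-mid-round-decidable} applies unchanged. The extra care you take with the ``converse'' direction is sound but not needed beyond the one-line remark that perfect information is already maximal.
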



Next, we consider the decidability of TDGC under $(r,r,0,0)$-rate-limited policies, which is tight given the undecidability of $(r,r-1,0,0)$-rate-limited policies. This shows that only one member of the team needs to have perfect information.

\begin{theorem}
\label{thm:tdgc-1-bit-every-mid-round-one-way-decidable}
TDGC is decidable in EXPSPACE with a 1-bit, mid-round exchange every round from $\exists_0$ to $\exists_1$, but none from $\exists_1$ to $\exists_0$: policies $P$ with
$P_\textsc{mid}(p) = (1,0)$
and
$P_\textsc{end}(p) = (0,0)$ for all $p \in \Pi$.
\end{theorem}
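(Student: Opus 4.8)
The plan is to reduce this game to a finite reachability problem, but over a larger state space than in the perfect-information case, which is why we land in EXPSPACE rather than PSPACE. The key observation is that $\exists_1$ receives one bit of useful information from $\exists_0$ each round, and the natural honest strategy for $\exists_0$ is to transmit $b_0$ from the \emph{previous} round (since $b_0$ arrives mid-round, after the channel has already been used that round, $\exists_0$ can send at most the backlog). So $\exists_1$'s knowledge about the values $b_0$ chosen by $\forall$ lags behind by a bounded amount; more carefully, since $\exists_0$ can send exactly one bit per round and $\forall$ produces exactly one bit $b_0$ per round, the honest strategy keeps $\exists_1$ exactly one round behind on learning the $b_0$-stream, while $\exists_0$ itself never learns anything about $b_1$. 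The first step is to argue, as in Theorem~\ref{thm:tdgc-1-bit-every-mid-round-decidable}, that it suffices to consider deterministic memoryless strategies — but here ``memoryless'' must be interpreted relative to the \emph{information state} of each player, not the DFA state alone, because $\exists_1$'s behavior legitimately depends on the pending bit it is about to receive.

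Second, I would set up the right notion of game state. The adversary $\forall$ has perfect information, so from $\forall$'s point of view a position is just $q \in Q$ together with whatever finite bookkeeping the $\exists$ strategy needs. For the $\exists$ team, I would argue that an optimal strategy for $\exists_0$ depends only on $q$ and on the one bit $b_0$ it is currently receiving (it has no other information — it never hears from $\exists_1$ and $\forall$'s past moves are summarized by $q$), while an optimal strategy for $\exists_1$ depends on $q$, on the bit $b_1$ it currently receives, and on the single transmitted bit from $\exists_0$. The subtlety is what that transmitted bit can usefully encode: since both players can compute the protocol, the content of the transmission is a function of $\exists_0$'s history, i.e. of the sequence of $(q, b_0)$ pairs seen so far. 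To make the strategy space finite, I would show that it is without loss of generality for $\exists_0$ to transmit a bounded-size summary — concretely, that a strategy can be taken to be a pair of functions $s_0 : (q, b_0) \mapsto m_0$ and a transmission function together with $s_1$ depending on a bounded memory, and then bound that memory by $O(|Q|)$ so that the whole strategy is described by singly-exponentially many bits in $|Q|$.

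Third, with a finite strategy space in hand, the decision procedure is brute force: enumerate all candidate $\exists$-strategies (doubly exponentially many in $\log$, i.e. an object of exponential size each), and for each one build the game graph on the $\forall$-controlled positions — vertices being $(q, \text{memory})$ pairs, edges given by $\forall$'s two choices of $(b_0,b_1)$ followed by the strategy-determined $(m_0,m_1)$ — and check by depth-first search that no $F_\forall$ state and no cycle is reachable from the start position, exactly as in Theorem~\ref{thm:tdgc-1-bit-every-mid-round-decidable}. The game graph has exponentially many vertices (because of the memory component), so the DFS uses exponential space, and storing one candidate strategy also takes exponential space; hence the whole procedure runs in EXPSPACE. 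I would also note the matching lower bound is \emph{not} claimed here — this is a decidability-with-an-upper-bound result, tight only in the sense that $(r, r-1, 0, 0)$ is already undecidable by Theorem~\ref{thm:tdgc-undecidable-xi-lt-r}.

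The main obstacle is the second step: precisely pinning down how much information $\exists_1$ can extract from the single-bit-per-round channel and proving that a bounded-memory transmission protocol is optimal. Naively the transmitted bit could encode anything about an unboundedly long history, so one must argue — using that the DFA has only $|Q|$ states and that plays longer than $|Q|$ without a winner are losing for $\exists$ — that only a window of the recent history of length $O(|Q|)$ can ever be relevant, which is what collapses the strategy space to exponential size and what distinguishes this case (one-way, so $\exists_0$ flies blind) from the PSPACE two-way case.
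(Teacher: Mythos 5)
There is a genuine gap at the step you yourself flag as the main obstacle, and the fix you sketch for it does not work. You propose to make the strategy space finite by arguing that ``only a window of the recent history of length $O(|Q|)$ can ever be relevant,'' justified by the fact that the DFA has $|Q|$ states and long winner-less plays are bad for $\exists$. That justification is circular in the imperfect-information setting: the bound ``plays longer than the state space without a winner are losing'' applies to the game graph over whatever state space the strategy actually uses, and here that state space is not $Q$ but $Q$ \emph{product} the strategy's memory — which is exactly the object whose size you are trying to bound. Moreover, a truncated window of observations is simply not a sufficient statistic: $\exists_0$'s uncertainty about the true configuration can depend on arbitrarily old observations. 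The paper resolves this with the standard belief-state (powerset) construction adapted from Peterson--Reif's hierarchical team games: $\exists_0$'s strategically relevant information is the pair $(b_0,u)$ where $u \subseteq Q \times \{b_0\} \times \{0,1\}$ is the set of mid-round configurations consistent with its entire observation history. This is an $O(|Q|)$-bit summary of the \emph{whole} history (not a window), it updates deterministically round to round, two histories with the same knowledge configuration are strategically equivalent, and there are $|C| = 2^{2|Q|+1}$ such configurations. Strategies become functions on $C$, the game graph is built on $C$, and counterexample paths must be searched up to length $|C|$ (exponential), not $|Q|$ — which is precisely why the bound is EXPSPACE rather than PSPACE. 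Your high-level frame (enumerate exponential-size strategies, build a game graph, DFS for reachable losing states or cycles) matches the paper, but without the belief-set lemma the argument does not close.

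A secondary error: you misread the timing of the channel. In the paper's model the mid-round exchange occurs \emph{after} $(b_0,b_1)$ are revealed and \emph{before} $(m_0,m_1)$ are chosen, so $\exists_0$ can forward the current round's $b_0$ and $\exists_1$ has perfect information with no one-round lag or ``backlog.'' This is not cosmetic: it is what makes the game hierarchical (one player fully informed, the other informed only of $q$-consistent possibilities), which is the structural hypothesis under which the Peterson--Reif-style belief-state argument applies cleanly.
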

\begin{proof}
    As described in the proof of Theorem~\ref{thm:tdgc-1-bit-every-mid-round-decidable}, $\exists_0$ can and should send $b_0$ to $\exists_1$ each round to give $\exists_1$ perfect information, but $\exists_0$ themself can learn nothing about $b_1$.
    Using the terminology from \cite{peterson_multiple-person_1979}, this asymmetry makes TDGC under $P$ a hierarchical team game.
    To decide the existence of a winning strategy, we adapt ideas from the proof of Theorem~4 in the same paper that shows DTIME$\left(2^{2^{2^{cS(n)}}}\right) \supseteq$ MPA${}_2$-SPACE$(S(n))$, the languages decided by hierarchical 2-vs-1 private alternation Turing machines in $S(n)$ space.
    
    Consider the set of all possible mid-round configurations $(q,b_0,b_1)$ of the game, which are fully known to $\forall$ and $\exists_1$.
    Define $C$ be the set of possible configurations $(b_0,u)$ of $\exists_0$'s mid-round knowledge: the known $b_0$ and the set $u \in \mathcal{P}(Q \times \{b_0\} \times \{0,1\})$ of possible mid-round configurations given the history of the game thus far.
    Since two game states with the same $c \in C$ are strategically equivalent from the perspective of $\exists_0$ (and thus $\exists_1$ too), a winning strategy only needs to account for the $|C| = 2^{2|Q|+1}$ knowledge configurations in its decision-making.\looseness=-1
    
    Given this, we can do a brute-force search as in Theorem~\ref{thm:tdgc-1-bit-every-mid-round-decidable} over the space of deterministic $\exists$~team strategies $s : c \in C \mapsto (m_0,m_1)$ of size $4^{|C|}$.
    For each $s$, we construct the game graph $G_s$, where $c \in C$ has an outgoing edge representing the outcome of each $b_0,b_1$ choice of $\forall$ after the $\exists$~players use $s$ to make their moves and $\exists_0$ updates their knowledge, and then search for counter-example game executions with length up to $|C|$ to decide whether $s$ is a winning strategy.
    Therefore, TDGC under $P$ is decidable in $\Theta(|C|)$ space, which is exponential in $|Q|$.
\end{proof}

As before, Theorem~\ref{thm:tdgc-1-bit-every-mid-round-decidable} extends to higher transmission rates from $\exists_0$ to $\exists_1$ (or vice versa), as long as the receiver stays silent.

\begin{corollary}
	TDGC is decidable in PSPACE with at least a 1-bit, mid-round exchange in one direction every round, but none in the other direction: policies $P$ with
$P_\textsc{mid}(p)[i] \geq 1$ and $P_\textsc{mid}(p)[1-i] = P_\textsc{end}(p)[1-i] = 0$ for all $p \in \Pi$ and some $i \in \{0,1\}$.
\end{corollary}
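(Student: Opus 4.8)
The plan is to reduce directly to Theorem~\ref{thm:tdgc-1-bit-every-mid-round-one-way-decidable}, the hierarchical (one-directional) case, by showing that every bit of bandwidth the speaking player $\exists_i$ has beyond a single mid-round bit is strategically worthless and may be fixed to a constant. Write $\exists_i$ for the player with $P_\textsc{mid}(p)[i] \ge 1$ on every policy state $p$, and $\exists_{1-i}$ for the silent receiver, which has $P_\textsc{mid}(p)[1-i] = P_\textsc{end}(p)[1-i] = 0$ throughout.

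First I would pin down the information available to each $\exists$ player. Since $\exists_{1-i}$ never transmits, $\exists_i$ receives nothing from its teammate, and by the TDGC rules the only thing $\exists_i$ ever observes is the sequence of bits $b_i$ that $\forall$ feeds to it. So everything $\exists_i$ could ever usefully communicate is a function of that sequence: if $\exists_i$ relays each $b_i$ to $\exists_{1-i}$ as soon as it receives it---one mid-round bit suffices, and by hypothesis such a bit is always available---then $\exists_{1-i}$, who also sees its own bits $b_{1-i}$ and knows the agreed strategy, can reconstruct all past $m_0,m_1$ and hence the exact state of $D$, obtaining perfect information exactly as in Theorem~\ref{thm:tdgc-1-bit-every-mid-round-one-way-decidable}. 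Any further mid-round or end-round bit in the $i \to 1-i$ direction then carries a value $\exists_{1-i}$ can already deduce, so a strategy may send $0$ in every such slot.

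Second, I would make this precise in the reduction style of the surrounding proofs. Given any winning $\exists$~team strategy for $P$, consider the strategy that on each round sends $b_i$ as its first mid-round bit and a constant in every other slot; the restricted strategies of this form are exactly the strategies for the policy $P'$ of Theorem~\ref{thm:tdgc-1-bit-every-mid-round-one-way-decidable}. Because $\exists_{1-i}$ has perfect information in both games and the knowledge state of $\exists_i$---its bit-history, summarized by the set of DFA configurations consistent with it, i.e. the set $C$ from that theorem's proof---evolves identically in the two games, the $\exists$ team forces a win under $P$ iff it does under $P'$. The decision procedure is then that theorem's brute-force search over strategies $s : C \to \{0,1\}^2$, run on $D$ directly, which inherits its space bound $\Theta(|C|) = \Theta(2^{2|Q|+1})$, exponential in $|Q|$; as in the two-directional corollary, the extra bandwidth does not change this bound.

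The step I expect to need the most care is the claim that the surplus $i \to 1-i$ bandwidth genuinely cannot help---ruling out schemes in which, for instance, $\exists_i$ withholds some $b_i$ in order to signal something later. This is impossible because $\exists_i$'s observation history is contained in that of $\exists_{1-i}$ (up to the one-round relay lag), so $\exists_i$ never learns anything $\exists_{1-i}$ does not already know; turning this monotonicity-of-information remark into a watertight proof that the restricted strategies above lose nothing is the real content of the argument, with the rest being a direct appeal to Theorem~\ref{thm:tdgc-1-bit-every-mid-round-one-way-decidable}.
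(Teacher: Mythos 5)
Your argument is essentially the intended one: the paper gives no explicit proof of this corollary, only the remark that the preceding theorem ``extends to higher transmission rates \ldots as long as the receiver stays silent,'' and your reduction---relay $b_i$ in the one guaranteed mid-round slot, observe that $\exists_i$'s observation history is a prefix of what $\exists_{1-i}$ then knows, so every surplus bit is a function of data the receiver already has and may be fixed to a constant---is exactly the right way to make that remark precise. The one point of divergence is the complexity class: you correctly conclude that this reduction inherits the $\Theta(|C|)=\Theta(2^{2|Q|+1})$ space bound of Theorem~\ref{thm:tdgc-1-bit-every-mid-round-one-way-decidable}, i.e.\ \textbf{EXPSPACE}, whereas the corollary as printed claims PSPACE and the sentence introducing it cites the two-directional PSPACE theorem rather than the one-way EXPSPACE theorem it actually generalizes. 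That appears to be an error in the statement rather than a gap in your proof; nothing in your (or the paper's) argument collapses the exponential knowledge-set construction for the uninformed player $\exists_i$, so EXPSPACE is the bound the argument supports, and you should not feel obliged to reach PSPACE here.
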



\section{Team Formula Games with Communication}
\label{sec: Team Formula Game}

\ifabstract
\later{
\section{Team Formula Games with Communication}
\label{app: Team Formula Game}
}
\fi

Formula games model many types of games.
The Team Formula Game was defined and proven undecidable in \cite{demaine2008constraint}. We define a communication version of this game and prove results analogous to the ones for TDGC. 

\begin{define}
	A \emph{Team Formula Game} (TFG) instance consists of sets of Boolean variables $X$, $X'$, $Y_1$, $Y_2$ and their initial values; variables $h_0,h_1 \in X$; and Boolean formulas $F(X,X',Y_0,Y_1)$, $F'(X,X')$, and $G(X)$ such that $F$ implies $\lnot F'$.
	The TFG problem asks whether $\{W_0,W_1\}$, team White, has a forced win against $\{B\}$, team Black, in the game that repeats the following steps in order ad infinitum:\looseness=-1
	\begin{enumerate}
	    \item \label{enum:tfg-set-X}
	    $B$ sets $X$ to any values. If $F$ and $G$ are true, then Black wins. If $F$ is false, White wins.
	    
	    \item \label{enum:tfg-set-X-prime}
	    $B$ sets $X'$ to any values. If $F'$ is false, then White wins.
	    
	    \item \label{enum:tfg-mid}
	    $W_0$ sets $Y_1$ to any values.
	    
	    \item \label{enum:tfg-end}
	    $W_1$ sets $Y_2$ to any values.
    \end{enumerate}
    where $B$ has perfect information but $W_i$ can only see the values of $Y_i$ and $h_i$.
\end{define}

\begin{define}
	\emph{Team Formula Game with Communication} (TFGC) is TFG along with a policy $P$ which specifies a number of bits to be transmitted between $W_0$ and $W_1$ mid-round (before each step~\ref{enum:tfg-mid}) and at the end of the round (after each step~\ref{enum:tfg-end})\looseness=-1
\end{define}

\begin{theorem}
	TFGC is undecidable under all $(r,x_0,x_1,N)$-rate-limited policies where $x_0,x_1 < r$.
\end{theorem}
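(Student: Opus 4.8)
The plan is to reduce from the undecidability of the zero-communication Team Formula Game (TFG), exactly paralleling the structure of the proof of Theorem~\ref{thm:tdgc-undecidable-xi-lt-r} for TDGC. Given an $(r,x_0,x_1,N)$-rate-limited policy $P$ with $x_0,x_1<r$ and a TFG instance $\mathcal{I} = (X,X',Y_0,Y_1,h_0,h_1,F,F',G)$, I will build a TFGC instance $\mathcal{I}'$ whose White team has a forced win under $P$ if and only if White has a forced win in $\mathcal{I}$. As in the DFA case, the key idea is channel clogging: $\mathcal{I}'$ simulates a play of the original TFG while forcing $W_0$ and $W_1$ to spend their entire communication budget retransmitting adversary-chosen ``garbage'' bits, so that no strategically useful information can pass between them. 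The same three-phase schedule --- build-up, clogging, tear-down, with boosting and draining sub-phases --- can be replayed verbatim at the level of phase structure; the only thing that changes is that the bounded state of the DFA $D'$ is replaced by a bounded collection of Boolean variables controlled by the formulas.

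The main work is encoding $D'$'s finite-state bookkeeping into formula form. Concretely: (i) the waiting counter, phase indicator, period offset $k$, and the counters $c_{01},c_{10},d_{01},d_{10}$ become $O(\log r)$ Boolean variables in $X$ (written by Black, validated by the formulas); (ii) each queue $X_i$ of at most $r^2+\delta_i = O(r^3)$ clogging bits becomes a block of Boolean variables with a head/tail pointer, again in $X$; (iii) the simulated DFA state $q$ is not needed --- instead the original TFG variables $X,X',Y_0,Y_1$ of $\mathcal{I}$ are carried along and the simulation steps of the clogging phase simply ``pass through'' to the original game's transition rules. Enqueue/dequeue/validation-failure actions become clauses: a failed validation forces $F$ true and $F'$/$G$ in whatever way makes White win immediately (mirroring $\Call{halt}{\forall}$ for the DFA needs care --- here we want the losing-the-clog case to go to a \emph{White} win since the adversary is Black and the clog checks are against the existential/White side... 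I must mirror exactly the TDGC polarity: in TDGC a failed clog check halts with a $\forall$ victory, so here a failed clog check must halt with a Black victory, achieved by making $F\wedge G$ true). The ``$\forall$ must announce whether the $\exists$ team has won'' step becomes ``$B$ must set designated bits of $X$ to the truth value of the White-win predicate of the simulated game, validated by $F'$''; since $B$ has perfect information this is always possible, and if $B$ lies White wins, so $B$ gains nothing by lying --- identical to the argument in the TDGC proof.

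The structural facts carry over unchanged: Lemmas~\ref{lem:policy-dfa-rate-limited}, \ref{lem:policy-period-double}, \ref{lem:partials-leq-linear-even}, Corollary~\ref{lem:partials-leq-linear-odd}, and Lemma~\ref{lem:one-sided-tear-down-upper-bound} are statements purely about the policy DFA and the rate sequences, so they apply directly. The accounting of Table~\ref{tbl:d-prime-accounting} --- that the total number of enqueues, the total number of dequeues, and the total information transferable from $W_i$ to $W_{1-i}$ all equal the same quantity $I$ --- is likewise a consequence of the $\delta_i$ algebra and is policy-side only; it transfers verbatim. Hence, as before, passing every validation check \emph{requires} $W_i$ to transmit exactly $I$ bits, which is the maximum, so no residual bandwidth exists for transmitting information about the simulated game, and $W_0$, $W_1$ have exactly the knowledge they would have in the no-communication TFG during the simulated rounds. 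The one-round alignment of TFG (White moves split into a mid-round $Y_1$ write and an end-of-round $Y_2$ write, with the Black perfect-information moves $X,X'$ first) matches the TDGC round shape closely enough that the half-round indexing in Lemma~\ref{lem:one-sided-tear-down-upper-bound} needs only trivial relabeling.

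The main obstacle I anticipate is the bookkeeping of \emph{where} validation is enforced: TFGC, unlike TDGC, has no explicit ``DFA enters a dead state'' mechanism, so every clog check and every consistency check on the encoded queues and counters must be folded into the single pair of formulas $F,F'$ (plus $G$) in a way that (a) makes a detected cheat an immediate White or Black win as appropriate, (b) does not accidentally let Black force an early win by violating an \emph{encoding} invariant that Black rather than White controls, and (c) keeps $F\Rightarrow\lnot F'$ satisfied as the definition requires. Handling (b) correctly --- ensuring that all the auxiliary state variables Black writes are either irrelevant when malformed or else cause a White win, never a Black win --- is the delicate part; the clean way is to make $F'$ encode ``the auxiliary state is well-formed and consistent with the previous round'', so that any malformation Black introduces falsifies $F'$ and hands the round to White, exactly dualizing the ``$D'$ halts with $\exists$ victory on a bad $\forall$ move'' cases in the TDGC construction. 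Once that encoding discipline is fixed, the equivalence proof is the same two-direction argument --- honest White strategy simulating a TFG-winning strategy; and any TFGC-winning White strategy containing, by the information-theoretic clog argument, a TFG-winning sub-strategy --- so the theorem follows from the undecidability of TFG.
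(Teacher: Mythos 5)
Your proposal is correct in outline but takes a genuinely different route from the paper. You reduce from the zero-communication TFG and rebuild the entire clogging apparatus (build-up, clogging, boosting, draining, the queue bookkeeping, the information-accounting argument) inside the formula game, encoding $D'$-style counters and queues as Boolean variables that Black writes and the formulas validate. The paper instead reduces from TDGC \emph{under the same policy $P$}, which Theorem~\ref{thm:tdgc-undecidable-xi-lt-r} has already shown undecidable, via a generic round-for-round simulation of a DFA game by a formula game (adapting Theorem~8 of \cite{demaine2008constraint}): Black is forced by $F$ to apply $\delta$ correctly to an encoded state $\enc{q}$, $F'$ forces the state copy $q' \gets q$, $G$ detects $q \in F_\forall$, and parity bits guarantee $F \Rightarrow \lnot F'$. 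Because the clog-check failures, the win-announcement rounds, and all phase logic are already baked into the DFA $D'$ as transitions into $F_\forall$ or $F_\exists$ states, the paper inherits the entire clogging analysis for free and never has to re-derive the polarity of each validation check in formula form. That is precisely the part of your plan you correctly flag as delicate --- ensuring every malformed Black update falsifies $F$ or $F'$ (White wins), every failed White clog check makes $F \wedge G$ true (Black wins), and $F \Rightarrow \lnot F'$ survives all the added clauses --- and while I see no reason it cannot be carried out, it multiplies the case analysis and re-verification substantially. Your approach does buy a self-contained proof that does not route through the DFA game, but the modular reduction is shorter, reuses the hard theorem as a black box, and avoids re-proving the accounting of Table~\ref{tbl:d-prime-accounting} in a second formalism. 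One small detail worth adopting from the paper even in your framework: the augmentation of the simulated machine with a mod-3 counter to kill four-edge cycles and a four-step initial path to align the first meaningful transition with the availability of player inputs; some such alignment device is needed in any faithful round-for-round formula simulation.
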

\begin{proof}
	For any such policy $P$, we reduce from the Team DFA Game with Communication under the same policy $P$, adapting the reduction done in Theorem~8 of \cite{demaine2008constraint} from the Team Computation Game to the Team Formula Game.
	In the reduction, the White team plays as the $\exists$~team and $B$ plays as $\forall$ while also facilitating the simulation of TDGC in TFGC.
	
	Given a DFA $D$ to play TDGC under $P$, we first augment $D$ so it will be suitable for the simulation. To each state, we add a 3-value counter to eliminate any four-edge cycles in the transition graph ($t \xrightarrow{\delta} (t+1) \xrightarrow{\delta} (t+2) \xrightarrow{\delta} t \xrightarrow{\delta} (t+1) \neq t$). Also, we add four new states in a path $q_0 \xrightarrow{\delta} q_0^{(1)} \xrightarrow{\delta} q_0^{(2)} \xrightarrow{\delta} q_0^{(3)} \xrightarrow{\delta} q_0^{(4)}$ from a new initial state $q_0$ to the original initial state $q_0^{(4)}$ in order to delay the first meaningful state transitions until the start of the second round, which is when the first set of player inputs are available.
	
	In the instance of TFGC, we will have
	(1) variables $h_i = b_i \in X$ and $b_i' \in X'$, representing the $\forall$~player's chosen bits in the current and previous round;
	(2) $Y_i = \{m_i\}$, containing the $\exists_i$ player's message bit each round;
	(3) sets of $\Theta(\log|Q|)$ variables $\enc{q'} \subset X'$ and $\enc{q} \subset X$ that encode the previous state $q'$ and current state $q$;
	(4) and two parity bits $p \in X$ and $p' \in X'$ which $B$ will be required to flip each round.
	We also choose the initial value of $q'$ to be $q_0$ so that in step~\ref{enum:tfg-set-X} of the first round $B$ will be forced to set $q$ to $q_0^{(4)}$; other initial values are arbitrary.
	
	In step~\ref{enum:tfg-set-X}, formula~$F$ holds if $B$ sets $X$ so $q = \delta(\delta(\delta(\delta(q', b_0'), b_1'), m_0), m_1)$, $q \notin F_{\exists}$, and $p' \neq p$.
	Formula~$G$ will be true if the current state $q \in F_{\forall}$. Thus, when $F$ and $G$ are both true, then in the TDGC the state transition function was correctly implemented and led to a final state where $\forall$ has won, and thus Black wins the TFGC.
	On the other hand, if $F$ is false, then either $B$ violated the simulation or the TDGC led to a final state where $\exists$~team has won, and thus White wins the TFGC.
	
	In step~\ref{enum:tfg-set-X-prime}, formula~$F'$ will be true if $B$ sets $X'$ such that $q' = q$ and $p' = p$, updating the previous state for the next round to the new state. If $F'$ is false, then $B$ violated the simulation, and thus White wins the TFGC.
	Additionally, the parity bit checks guarantee that $F$ implies $\lnot F'$.
	
	Since this is a faithful simulation where each round of TFGC corresponds exactly to one round of TDGC,
	and by Theorem~\ref{thm:tdgc-undecidable-xi-lt-r} it is undecidable whether or not there exists a winning strategy for the $\exists$~team playing TDGC under $P$,
	it is also undecidable whether or not there exists a winning strategy for White playing TFGC under the same policy $P$.
\end{proof}


The strategy for proving decidability results of Team DFA Game with Communication also be used to give the following tight decidability results on the Team Formula Game with Communication. \ifabstract See Appendix~\ref{app: Team Formula Game} for details.\fi

\both{
\begin{theorem}
\label{thm:tfgc-1-bit-every-mid-round-decidable}
TFGC is decidable in 2-EXPSPACE with a 1-bit, mid-round exchange in both directions every round: policies $P$ with
$P_\textsc{mid}(p) = (1,1)$
and
$P_\textsc{end}(p) = (0,0)$ for all $p \in \Pi$.
\end{theorem}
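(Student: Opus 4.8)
The plan is to mirror the PSPACE argument of Theorem~\ref{thm:tdgc-1-bit-every-mid-round-decidable}, but account for the extra exponential blow-up inherent in formula games: whereas a DFA has only $|Q|$ explicit states, a Team Formula Game over variable sets $X,X',Y_0,Y_1$ has a state space of size $2^{\Theta(|X|+|X'|+|Y_0|+|Y_1|)}$, which is exponential in the input size $n$. Under the policy $P_\textsc{mid}(p)=(1,1)$, $P_\textsc{end}(p)=(0,0)$, optimal play for $W_i$ is to send $h_i$ (equivalently, the bit of $X$ they cannot see) to $W_{1-i}$ as soon as step~\ref{enum:tfg-set-X-prime} has fixed $X$ and $X'$, so that both White players know the full values of $X$ and $X'$ before choosing $Y_1,Y_2$. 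Hence, as in the DFA case, TFGC becomes a \emph{perfect-information} game: the White team can be assumed to play a deterministic, memoryless strategy $s$ mapping a fully-known configuration $(X,X')$ to an assignment of $(Y_0,Y_1)$, because the transitions and the adversary's choices do not depend on history. Justifying memorylessness and determinism uses exactly the same two sub-arguments as in Theorem~\ref{thm:tdgc-1-bit-every-mid-round-decidable} (fix a positive-probability branch of any nondeterministic winning strategy; observe that configurations are strategically equivalent regardless of how they were reached).

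Next I would set up the brute-force search. A strategy $s$ is a function from the $2^{|X|+|X'|}$ possible $(X,X')$ configurations to one of $2^{|Y_0|+|Y_1|}$ White responses, so there are $\left(2^{|Y_0|+|Y_1|}\right)^{2^{|X|+|X'|}} = 2^{2^{\Theta(n)}}$ candidate strategies, each describable in $2^{\Theta(n)}$ space. For a fixed $s$, I build the game graph $G_s$ whose vertices are the reachable configurations (states of the form: values of the persistent variables $X$ together with the ``previous-round'' variables encoded into the instance) and whose edges record, for each Black choice of $X$ (step~\ref{enum:tfg-set-X}) and $X'$ (step~\ref{enum:tfg-set-X-prime}), the successor configuration produced after White applies $s$. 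A vertex is terminal-losing for White if Black can make $F\wedge G$ true there, and terminal-winning for White if $F$ is false or $F'$ is false on every Black move. Then $s$ is a winning strategy iff, from the initial configuration, no Black-reachable path in $G_s$ hits a White-losing terminal or a cycle (a cycle corresponds to a non-terminating play, which is not a White win). Since $G_s$ has $2^{\Theta(n)}$ vertices, this reachability/cycle check is a depth-first search using $2^{\Theta(n)}$ space, and the outer loop over all $s$ reuses the same space. Altogether the algorithm runs in space $2^{2^{\Theta(n)}}$, i.e.\ 2-EXPSPACE.

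The one place that needs genuine care — and what I expect to be the main obstacle — is verifying that the reduction really is perfect information, i.e.\ that sending a single bit mid-round suffices to convey everything strategically relevant. In TFGC the hidden data each round is only the bit $h_i\in X$ that $W_i$ cannot see, plus whatever White players must remember across rounds; but the persistent part of the state ($X$ and the encoded previous-round variables) is common knowledge by induction once the first round's exchange has happened, and $X'$ is set by Black before step~\ref{enum:tfg-mid} and is visible to neither White player directly — so one has to confirm that $X'$ can also be reconstructed by both White players from the information flow, or else fold it into the definition of ``configuration'' the way $u\in\mathcal{P}(\cdot)$ was used in Theorem~\ref{thm:tdgc-1-bit-every-mid-round-one-way-decidable}. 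If $X'$ is genuinely private to Black and unbounded, the game is not literally perfect information and the argument instead needs the knowledge-set trick: White's effective state becomes a subset of configurations consistent with history, which is exponential in $n$, the strategy space becomes doubly exponential, and the search still fits in 2-EXPSPACE. Either way the bound is the same; the write-up just has to pick the correct notion of ``configuration'' so that the memoryless/deterministic reduction is valid, and then the counting and the DFS go through verbatim from the earlier proofs.
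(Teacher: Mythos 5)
Your main line of argument rests on a premise that the game definition does not support: in TFG, $W_i$ sees \emph{only} the values of $Y_i$ and $h_i$ --- not the rest of $X$ and not $X'$ at all. So after the mid-round exchange of $h_0$ and $h_1$, the two White players do have \emph{identical} information (the game collapses to an effective two-player game), but that shared information is still badly incomplete: $X\setminus\{h_0,h_1\}$ and all of $X'$ remain hidden, and Black may reset them every round. The game is therefore not perfect-information, and a strategy of the form $s:(X,X')\mapsto(Y_0,Y_1)$ is not even well-defined for White, since White never observes $(X,X')$. This is exactly the point where your proposal diverges from the paper: the paper's proof explicitly states that TFGC under this policy becomes a two-player but \emph{imperfect}-information game, and takes White's effective state to be a knowledge set $c\in C$ of possible hidden assignments (refined each round by $h_0,h_1$ and by the fact that the game did not end, i.e.\ $F$ held, $G$ failed, and $F'$ held), with $|C| = 2^{2^{|X|+|X'|-2}}$; the brute-force search over strategies $s:(c,h_0,h_1)\mapsto(Y_0,Y_1)$ on the graph $G_s$ over $C$ is what forces the bound up to 2-EXPSPACE. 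Your closing paragraph does anticipate precisely this failure mode and sketches the correct repair (the knowledge-set construction), which is the paper's actual argument --- so the proposal is salvageable, but as written the ``perfect information'' branch should be deleted rather than presented as the default.

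Two smaller points. First, your space accounting in the main line is internally inconsistent: having $2^{2^{\Theta(n)}}$ \emph{candidate strategies}, each describable in $2^{\Theta(n)}$ bits and tested with a $2^{\Theta(n)}$-space DFS, yields an EXPSPACE algorithm (the outer loop reuses space), not 2-EXPSPACE; the doubly-exponential \emph{space} bound arises only in the knowledge-set version, where a single strategy and a single vertex of $G_s$ already require $2^{\Theta(n)}$-size descriptions and the graph has $2^{2^{\Theta(n)}}$ vertices. Second, in the fallback you describe the strategy space as ``doubly exponential''; it is the \emph{domain} $C$ that is doubly exponential in cardinality (each knowledge set needs exponentially many bits to write down), which is what matters for the space bound.
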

}

\later{
\begin{proof}
	Since player $B$ gives one bit $h_i$ to player $W_i$ each round, under such a policy $P$, $h_i$ can be immediately transmitted to $W_{1-i}$. Thus $W_0$ and $W_1$ always have the same information when choosing to set $Y_0$ and $Y_1$.
	This effectively makes the TFGC a two-player game, but not a perfect information game: the values of $X' \cup X \setminus \{h_0,h_1\}$ are not visible to White, and between each move of White, Black is allowed to set $X$ and $X'$ to any values.
	However, White does gain information on their turns from knowing $h_0,h_1$ and that the game did not end, as it must have been that $F$ was true and $G$ was false after step~\ref{enum:tfg-set-X}, and $F'$ was true after step~\ref{enum:tfg-set-X-prime}.
	
	Therefore, as before, we can check over the entire space of strategies $s : (c,h_0,h_1) \mapsto (Y_0,Y_1)$ for White, where $c \in C$ specifies the set of possible hidden values for $X' \cup X \setminus \{h_0,h_1\}$, by creating a strategy graph $G_s$ on $C$ with degree $4$ to search for counterexample paths of length at most $|C|$ to $s$ being a winning strategy.
	Since $|C| = 2^{2^{|X|+|X'|-2}}$, we only require space doubly-exponential in the input size, which is $\Omega(|X|+|X'|)$.
\end{proof}
}

\both{
\begin{theorem}
\label{thm:tfgc-1-bit-every-mid-round-one-way-decidable}
TFGC is decidable in 3-EXPSPACE with a 1-bit, mid-round exchange every round from $W_0$ to $W_1$, but none from $W_1$ to $W_0$: policies $P$ with
$P_\textsc{mid}(p) = (1,0)$
and
$P_\textsc{end}(p) = (0,0)$ for all $p \in \Pi$.
\end{theorem}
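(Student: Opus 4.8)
The plan is to combine the two key ideas already developed in the excerpt: the PSPACE/EXPSPACE brute-force search over White-team strategies indexed by \emph{knowledge configurations} (used in Theorems~\ref{thm:tfgc-1-bit-every-mid-round-decidable} and \ref{thm:tdgc-1-bit-every-mid-round-one-way-decidable}) and the observation that a one-directional channel turns a team game into a \emph{hierarchical} team game in the sense of \cite{peterson_multiple-person_1979}. Under the policy $P_\textsc{mid}(p)=(1,0)$, player $W_0$ can and should transmit $h_0$ to $W_1$ every round, so $W_1$ always knows both $h_0$ and $h_1$ on its turn, while $W_0$ learns nothing about $h_1$. So $W_1$'s information strictly dominates $W_0$'s: the two players are totally ordered in the Peterson--Reif hierarchy, which is exactly the setting in which private alternation is decidable at the cost of one extra exponential per level of the hierarchy.

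First I would pin down the relevant knowledge sets. As in Theorem~\ref{thm:tfgc-1-bit-every-mid-round-decidable}, the Black player's full mid-round configuration is the current assignment to $X \cup X'$ together with the game's bookkeeping; call the set of these $S$, with $|S| = 2^{|X|+|X'|}$. Player $W_0$'s knowledge after receiving $h_0$ (and nothing else) is a pair $(h_0, u_0)$ where $u_0 \subseteq S$ is the set of configurations consistent with the play history; the number of such knowledge states is $c_0 := 2 \cdot 2^{|S|}$, doubly exponential in the input. Player $W_1$'s knowledge, after receiving $h_1$ and the transmitted copy of $h_0$, is a pair $((h_0,h_1), u_1)$ where $u_1 \subseteq S$ is \emph{itself} refined by $W_0$'s knowledge state --- i.e. $u_1$ ranges over subsets of the $W_0$-knowledge space, so the number of $W_1$ knowledge states is $c_1 := 4 \cdot 2^{c_0}$, which is triply exponential. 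A White-team strategy is then a pair of functions $s_0$ from $W_0$-knowledge states to assignments of $Y_0$ and $s_1$ from $W_1$-knowledge states to assignments of $Y_1$; memorylessness over knowledge states suffices by the same argument as in Theorem~\ref{thm:tdgc-1-bit-every-mid-round-decidable} (the policy is constant per round, $F,F',G$ do not depend on history, and Black is unconstrained), and determinism suffices by the same support-fixing argument.

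Next I would run the brute-force search: enumerate all White strategies $s=(s_0,s_1)$ --- there are at most $(2^{|Y_0|})^{c_0} \cdot (2^{|Y_1|})^{c_1}$ of them, which is still triply exponential in space to write down --- and for each one build the game graph $G_s$ whose vertices are $W_1$-knowledge states (the finest knowledge level, which determines both players' moves and the subsequent knowledge update of both players) and whose edges encode, for each of the (at most four, after accounting for which of Black's choices keep $F$ true and $G$ false and $F'$ true) legal Black responses, the resulting next-round knowledge state. Then $s$ is a winning White strategy iff no vertex reachable from the initial knowledge state corresponds to a Black win and no cycle is reachable --- equivalently, there is no counterexample path of length $\le c_1$ --- which we check by depth-first search in space $O(c_1)$. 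The total space is dominated by storing the current $s$, $G_s$ implicitly, and the DFS stack, all of size polynomial in $c_1$, hence $\Theta(2^{2^{2^{O(|X|+|X'|)}}})$, i.e. 3-EXPSPACE.

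The main obstacle, and the step deserving the most care, is justifying that a strategy depending only on the \emph{triply-exponential hierarchy of knowledge sets} is without loss of generality --- in particular that $W_1$'s relevant state is $u_1 \subseteq (\text{$W_0$-knowledge space})$ rather than just $u_1 \subseteq S$, because $W_1$ must reason about what $W_0$ can infer in order to coordinate, and this nesting is precisely what produces the extra exponential relative to Theorem~\ref{thm:tdgc-1-bit-every-mid-round-one-way-decidable}. I would make this rigorous exactly as \cite{peterson_multiple-person_1979} does for hierarchical private alternation: argue that two global game positions inducing the same $W_1$-knowledge state are strategically indistinguishable to the entire White team (since $W_0$'s knowledge is a function of $W_1$'s, and $W_1$'s move and both players' updates depend only on $W_1$'s knowledge state), so collapsing to knowledge states loses no White strategy and adds no spurious winning one. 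Everything else --- the per-round information gain for White from the fact that $F$ held, $G$ failed, and $F'$ held, the ``$\le c_1$ round'' bound on non-terminating play, and the DFS --- is routine given the pattern already established for TDGC.
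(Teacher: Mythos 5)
Your overall plan---a brute-force search over memoryless, deterministic White strategies indexed by collapsed knowledge states, with a depth-first search for counterexample paths, landing at a triply-exponential space bound---is the same as the paper's, which likewise extends Theorem~\ref{thm:tfgc-1-bit-every-mid-round-decidable} by the hierarchical-game idea of Theorem~\ref{thm:tdgc-1-bit-every-mid-round-one-way-decidable}. However, the step you yourself single out as the delicate one---the construction of the nested knowledge hierarchy---is inverted, and as written the justification does not go through. Under $P_\textsc{mid}(p)=(1,0)$ it is $W_1$ who is better informed: it sees $h_0$ (transmitted), $h_1$, and, given a deterministic $W_0$ strategy, can reconstruct $W_0$'s entire view (hence $W_0$'s exact knowledge state) from the shared $h_0$ history. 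So $W_1$ has \emph{no} uncertainty about $W_0$'s knowledge; its collapsed state is a pair consisting of its own set of possible hidden assignments (doubly exponential in count) together with the single $W_0$-knowledge state it has computed. Your $u_1 \subseteq (\text{$W_0$-knowledge space})$ therefore posits uncertainty $W_1$ does not have and, more seriously, fails to record $W_1$'s own belief about the hidden assignment, which is what $s_1$ must act on.

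Dually, your $W_0$-knowledge state $(h_0, u_0)$ with $u_0 \subseteq S$ undercounts: two histories can induce the same set of possible assignments for $W_0$ while inducing different sets of possible $W_1$-knowledge states, and since $W_1$'s future moves are a function of its knowledge state, such histories are not strategically equivalent for $W_0$. The third exponential must therefore sit on $W_0$'s side: the less-informed player's collapsed state is a \emph{set of possible pairs} (hidden configuration, $W_1$-knowledge state)---in the paper's notation, an element of $\mathcal{P}(C)$ where $C$ is the already doubly-exponential configuration space of the better-informed side---and it is $W_0$'s strategy table and the resulting game graph that are triply exponential. With your definitions, the equivalence lemma you invoke (``two positions with the same $W_1$-knowledge state are indistinguishable to the whole team, since $W_0$'s knowledge is a function of $W_1$'s'') fails in both directions, so the restriction to strategies of your form is not justified. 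The 3-EXPSPACE figure coincides with the paper's only because both the correct hierarchy and the inverted one happen to top out at a triply-exponential state space.
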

}

\later{
\begin{proof}
	By the same argument as in Theorem~\ref{thm:tdgc-1-bit-every-mid-round-one-way-decidable}, we can extend Theorem~\ref{thm:tfgc-1-bit-every-mid-round-decidable} by adding the set of possible knowledge states of $W_0$ to the game configuration.
	If we take $C = \mathcal{P}\left(X' \cup X \setminus \{h_0,h_1\}\right)$ then we must consider triplets $(h_0,c_0,c)$ for actual configurations $c \in C$ the game can be in, and $c_0 \in \mathcal{P}(C)$ specifying what $W_0$ thinks $c$ could be.
	
	This allows us to bound the space required to search (like before) for a winning strategy $s : (h_0 \in \{0,1\}, c_0 \in \mathcal{P}(C)) \mapsto (Y \cup Y')$ to $2^{2^{2^{O(|X|+|X'|)}}}$, triply-exponential in the input size, which is $\Omega(|X|+|X'|)$.
\end{proof}
}


\section{Open Problems}

One exciting question is whether we can prove computational complexity results about real games with communication. It seems plausible that TDGC may be sufficient for applications to games with highly structured communication. We present a number of questions that we think may help strengthen results to allow their application to more real world scenarios or questions we find particularly interesting for their own sake.

One of the main technical questions left open by this work is the complexity for rate-limited policies with $x_0 \geq r$ and $r>x_1>0$. We conjecture this case is decidable but our current arguments rely on both players either having full information or no information. 

Looking further, there are many interesting variations and extensions of this model to study. Our arguments rely heavily on communication policies having some bounded period which is useful both for algorithms to bound the uncertainty in the game and for undecidability to allow for constructions that simulate a step in a zero information game after a bounded number of rounds. What happens if our policy is described by something more general than a DFA, such as a sequence recognizable by a pushdown automaton?

Similarly, some of our arguments rely on the fact that the game is played on something with bounded state, such as a DFA or Boolean Formula. What happens with team games on more general systems, such as a pushdown automaton or a bounded space Turing Machine?

Many realistic scenarios have noisy communication channels. How does the computational complexity change under different models of noise? We conjecture that there will again be a cutoff based on whether the information capacity of the channel is sufficiently high. However, it is also possible that the small probability of error will compound over these games of unbounded length resulting in different behavior. It would also be interesting to understand what happens when other sources of inherent randomness are introduced to these games.

It is also often the case that one's ability to communicate depends on the state of the environment and potentially the actions of the people involved. Thus having communication policies that depend on player actions or the game state would be another interesting generalization.

We also only consider two players on the Existential Team. We believe that when more players are added, undecidability will emerge if at least two players have imperfect information. However, this should be verified and the details around more complex communication patterns may lead to richer behavior.

Finally, there is an issue when trying to apply these results to real games or real world problems. Our characterization in some sense relies on communication being high or low compared to critical or meaningful choices in the games. Many natural scenarios have a much larger action space than communication rate, however many of those choices may be essentially equivalent or strategically inadvisable. Undecidability proofs such as those for Team Fortress 2~\cite{coulombe_et_al:LIPIcs:2018:8805} have very inefficient reductions and require significant numbers of in-game actions to simulate one move in the DFA game. This makes a direct application of our results difficult.

\subsubsection*{Acknowledgements}
We would like to thank Erik Demaine and other participants in the class 6.892 Algorithmic Lower Bounds: Fun with Hardness Proofs (Spring 2019) for useful discussion and the suggestion of potential applications. Thanks to Sophie Monahan for editing assistance. 


\bibliographystyle{eptcs}
\bibliography{biblio.bib}{}


\appendix
\magicappendix

\end{document}